\newcommand{\Z}{\mathbb Z}
\newcommand{\N}{\mathbb N}
\newcommand{\set}[1]{\left\{ #1 \right\}}
\newcommand{\game}{\mathcal{G}}
\newcommand{\auto}{\mathcal{A}}
\newcommand{\VE}{\textrm{V}_{\textrm{Eve}}}
\newcommand{\VA}{\textrm{V}_{\textrm{Adam}}}
\newcommand{\last}{\textrm{last}}
\newcommand{\col}{\textrm{col}}
\newcommand{\MP}[1]{\mathtt{MeanPayoff}_{#1}}
\newcommand{\chain}{\triangleright}
\newcommand{\Path}{\textrm{Path}}
\newcommand{\FPath}{\textrm{Path}_{\textrm{fin}}}
\newcommand{\IPath}{\textrm{Path}_{\infty}}
\newcommand{\Safe}{\mathtt{Safe}}
\newcommand{\Parity}{\mathtt{Parity}}
\newcommand{\len}{\textrm{len}}
\newcommand{\NP}{\textrm{NP}}
\newcommand{\coNP}{\textrm{coNP}}
\tikzstyle{every node}=[font=\small]
\tikzstyle{eve}=[circle,minimum size=.3cm,draw=gray!90,inner sep=1pt,fill=gray!20,very thick]
\tikzstyle{adam}=[rounded corners=.5,regular polygon,regular polygon
\tikzstyle{every edge}=[draw,>=stealth',shorten >=1pt]
\tikzstyle{win}=[fill=green!50,draw=green!70!black]
\tikzstyle{lose}=[fill=white,draw=red!70!black]
\tikzstyle{state}=[draw,circle,minimum size=5mm]
\tikzstyle{accepting}=[double]
\newtheorem{theorem}{Theorem}
\newtheorem{lemma}{Lemma}
\newtheorem{corollary}{Corollary}
\newtheorem{definition}{Definition}
\title{New Algorithms for Combinations of Objectives\\ using Separating Automata}
\author{Ashwani Anand 
\institute{Chennai Mathematical Institute, Chennai, India}
\and
Nathana{\"e}l Fijalkow 
\institute{CNRS, LaBRI, Bordeaux, France}
\institute{The Alan Turing Institute, London, United Kingdom}
\and
Ali{\'e}nor Goubault-Larrecq
\institute{ENS Lyon, Lyon, France}
\and 
J{\'e}r{\^o}me Leroux
\institute{CNRS, LaBRI, Bordeaux, France}
\and
Pierre Ohlmann
\institute{Universit{\'e} de Paris, Paris, France}
}
\begin{document}
\maketitle

\begin{abstract}
The notion of separating automata was introduced by Boja{\'n}czyk and Czerwi{\'n}ski for understanding the first quasipolynomial  time algorithm for parity games. In this paper we show that separating automata is a powerful tool for constructing algorithms solving games with combinations of objectives. We construct two new algorithms: the first for disjunctions of parity and mean payoff objectives, matching the best known complexity, and the second for disjunctions of mean payoff objectives, improving on the state of the art. In both cases the algorithms are obtained through the construction of small separating automata, using as black boxes the existing constructions for parity objectives and for mean payoff objectives.
\end{abstract}

\section{Introduction}
\label{sec:intro}
The notion of separating automata was introduced by Boja{\'n}czyk and Czerwi{\'n}ski~\cite{BC18} to give a streamlined presentation of the first quasipolynomial time algorithm for parity games due to Calude, Jain, Khoussainov, Li, and Stephan~\cite{CJKLS17}.
The first observation made by Boja{\'n}czyk and Czerwi{\'n}ski was that the statistics used in that algorithm can be computed by a finite deterministic safety automaton reading a path in the game.
They showed that the property making this particular automaton useful for solving parity games is that, roughly speaking, it separates positional winning paths from losing paths: they called this property separating.
The second observation of Boja{\'n}czyk and Czerwi{\'n}ski was that separating automata yield a natural reduction to safety games, in other words the construction of a separating automaton induces an algorithm whose complexity depends on the size of the automaton.

The notion of separating automata has been further studied in the context of parity games:
Czerwi{\'n}ski, Daviaud, Fijalkow, Jurdzi{\'n}ski, Lazi{\'c}, and Parys~\cite{CDFJLP18} showed that two other quasipolynomial time algorithms can also be presented using the construction of a separating automaton.
The main technical result of~\cite{CDFJLP18} is that separating automata are in some sense equivalent to the notion of universal trees at the heart of the second quasipolynomial time algorithm by Jurdzi{\'n}ski and Lazi{\'c}~\cite{JL17} and formalised by Fijalkow~\cite{Fij18}. The consequence of this equivalence is a quasipolynomial lower bound on the size of separating automata for parity objectives.

\vskip1em
Going beyond parity games, Colcombet and Fijalkow~\cite{CF18,CF19} introduced universal graphs and showed an equivalence result between separating automata and universal graphs for any positionally determined objective.
This paves the way for using separating automata for other classes of objectives. The first work in that direction is due to Fijalkow, Ohlmann, and Gawrychowski~\cite{FGO18}, who obtained matching upper and lower bounds on the size of separating automata for mean payoff objectives, matching the best known (deterministic) complexity for solving mean payoff games.

\vskip1em
The goal of this paper is to show how to construct separating automata for combinations of objectives, thereby obtaining new algorithms for solving the corresponding games.
We will consider two subclasses combining parity and mean payoff objectives, with the following goal: rather than constructing separating automata from scratch, we want to define constructions using separating automata for the atomic objectives as black boxes. In other words, we assume the existence of separating automata for parity objectives (provided in~\cite{CDFJLP18}, see also~\cite{CF18}) and for mean payoff objectives (provided in~\cite{FGO18}), and construct separating automata for combinations of these classes.
An important benefit of this approach is its simplicity: as we will see, both constructions and their correctness proofs are rather short and focus on the interactions between the objectives.

\vskip1em
Section~\ref{sec:preliminaries} introduces separating automata and shows how they yield algorithms by reduction to safety games.
The two classes of objectives we consider are disjunctions of parity and mean payoff objectives in Section~\ref{sec:disjmeanpayoffparity}, and disjunctions of mean payoff objectives in Section~\ref{sec:disjmeanpayoff}.
We refer to the subsections~\ref{subsec:complexitydisjunctionsparitymeanpayoff} and~\ref{subsec:complexitydisjunctionsmeanpayoff}
for related work on solving these games.

\section{Preliminaries}
\label{sec:preliminaries}
We write $[i,j]$ for the interval $\set{i,i+1,\dots,j-1,j}$, and use parentheses to exclude extremal values,
so for instance $[i,j)$ is $\set{i,i+1,\dots,j-1}$.
We let $C$ denote a set of colours and write $C^*$ for finite sequences of colours (also called finite words),
$C^+$ for finite non-empty sequences, and $C^\omega$ for infinite sequences (also called infinite words).
The empty word is $\varepsilon$.

\subsection{Graphs}

\paragraph*{Graphs}
We consider edge labelled directed graphs: 
a graph $G$ is given by a (finite) set $V$ of vertices and a (finite) set $E \subseteq V \times C \times V$ of edges,
with $C$ a set of colours, so we write $G = (V,E)$.
An edge $(v,c,v')$ is from the vertex $v$ to the vertex $v'$ and is labelled by the colour $c$.
We sometimes refer to $V(G)$ for $V$ and $E(G)$ for $E$ to avoid any ambiguity.
The size of a graph is its number of vertices.
A vertex $v$ for which there exists no outgoing edges $(v,c,v') \in E$ is called a sink.

\paragraph*{Paths}
A path $\pi$ is a (finite or infinite) sequence 
\[
\pi = v_0 c_0 v_1 c_1 v_2 \dots
\]
where for all $i$ we have $(v_i,c_i,v_{i+1}) \in E$.
If it is finite, that is, $\pi= v_0 c_0 \dots c_{i-1} v_i$, we call $\len(\pi) = i \geq 0$ the length of $\pi$, 
and use $\last(\pi)$ to denote the last vertex $v_i$.
The length of an infinite path is $\len(\pi) = \infty$.
We say that $\pi$ starts from $v_0$ or is a path from $v_0$, 
and in the case where $\pi$ is finite we say that $\pi$ is a path ending in $\last(\pi)$ or simply a path to $\last(\pi)$.
We say that $v'$ is reachable from $v$ if there exists a path from $v$ to $v'$.
We let $\pi_{\le i}$ denote the prefix of $\pi$ of length $i$, meaning 
$\pi_{\le i} = v_0 c_0 v_1 \dots c_{i-1} v_i$.
A cycle is a path from a vertex to itself of length at least one.

We use $\FPath(G), \IPath(G)$ and $\Path(G)$ to denote respectively the sets of finite paths of $G$, infinite paths of $G$, and their union. We sometimes drop $G$ when it is clear from context. 
For $v_0 \in V$ we also use $\FPath(G,v_0), \IPath(G,v_0)$ and $\Path(G,v_0)$ 
to refer to the sets of paths starting from $v_0$. 
We use $\col(\pi)$ to denote the (finite or infinite) sequence of colours $c_0 c_1 \dots$ induced by $\pi$.

\paragraph*{Objectives}
An objective is a set $\Omega \subseteq C^\omega$ of infinite sequences of colours.
We say that a sequence of colours belonging to $\Omega$ satisfies $\Omega$, and extend this terminology to infinite paths:
$\pi$ satisfies $\Omega$ if $\col(\pi) \in \Omega$.

\begin{definition}[Graphs satisfying an objective]
Let $\Omega$ be an objective and $G$ a graph.
We say that $G$ satisfies $\Omega$ if all infinite paths in $G$ satisfy $\Omega$.
\end{definition}

\paragraph*{Safety automata}
A deterministic safety automaton over the alphabet $C$ is given by a finite set of states $Q$, an initial state $q_0 \in Q$, 
and a transition function $\delta : Q \times C \to Q$, so we write $\auto = (Q,q_0, \delta)$.
Note that $\delta$ is a partial function, meaning that it may be that $\delta(q,c)$ is undefined for some $q,c \in Q \times C$.
Such an automaton induces a graph whose set of vertices is $Q$ and set of edges is 
$E = \set{(q, \delta(q,c)) : q \in Q, c \in C}$; we therefore use the terminology for graphs
to speak about automata, and identify an automaton and the graph it induces.
Since we are only considering deterministic safety automata in this paper, 
we omit the adjectives deterministic and safety and simply speak of an automaton.
We extend $\delta$ to sequences of colours by the formulas 
$\delta^*(q,\epsilon) = q$ and $\delta^*(q,wc) = \delta(\delta^*(q,w),c)$.
The language recognised by an automaton $\auto$ is $L(\auto)$ defined by
\[
L(\auto) = \set{\col(\pi) : \pi \text{ infinite path from $q_0$}}.
\]
Note that if $w$ is a finite prefix of a word in $L(\auto)$, then $\delta^*(q_0,w)$ is well defined.

\subsection{Games}

\paragraph*{Arenas}
An arena is given by a graph $G$ together with a partition $V = \VE \uplus \VA$ of its set of vertices describing which player controls each vertex.

\paragraph*{Games}
A game is given by an arena and an objective $\Omega$.
We often let $\game$ denote a game, its size is the size of the underlying graph.
It is played as follows.
A token is initially placed on some vertex $v_0$, and the player who controls this vertex pushes the token along an edge, 
reaching a new vertex; the player who controls this new vertex takes over
and this interaction goes on either forever and describing an infinite path or until reaching a sink.

We say that a path is winning\footnote{We always take the point of view of Eve, so winning means winning for Eve, and similarly a strategy is a strategy for Eve.} if it is infinite and satisfies $\Omega$, or finite and ends in a sink.
The definition of a winning path includes the following usual convention: if a player cannot move they lose,
in other words sinks controlled by Adam are winning (for Eve) and sinks controlled by Eve are losing.

We extend the notations $\FPath, \IPath$ and $\Path$ to games by considering the underlying graph.

\paragraph*{Strategies}
A strategy in $\game$ is a partial map $\sigma : \FPath(\game) \to E$ such that $\sigma(\pi)$ is 
an outgoing edge of $\last(\pi)$ when it is defined.
We say that a path $\pi = v_0 c_0 v_1 \dots$ is consistent with $\sigma$ if
for all $i < \len(\pi)$, if $v_i \in \VE$ then $\sigma$ is defined over $\pi_{\leq i}$ and 
$\sigma(\pi_{\le i}) = (v_i, c_i, v_{i+1})$.
A consistent path with $\sigma$ is maximal if it is not the strict prefix of a consistent path with $\sigma$
(in particular infinite consistent paths are maximal).

A strategy $\sigma$ is winning from $v_0$ if all maximal paths consistent with $\sigma$ are winning.
Note that in particular, if a finite path $\pi$ is consistent with a winning strategy $\sigma$ and ends in a vertex which belongs to Eve, then $\sigma$ is defined over $\pi$.
We say that $v_0$ is a winning vertex of $\game$ or that Eve wins from $v$ in $\game$ 
if there exists a winning strategy from $v_0$.

\paragraph*{Positional strategies}
Positional strategies make decisions only considering the current vertex. 
Such a strategy is given by $\widehat \sigma : \VE \to E$.
A positional strategy induces a strategy $\sigma : \FPath \to E$ from any vertex $v_0$ 
by setting ${\sigma}(\pi) = \widehat{\sigma}(\last(\pi))$ when $\last(\pi) \in \VE$.

\begin{definition}
We say that an objective $\Omega$ is \textit{positionally determined} if for every game with objective $\Omega$ and vertex $v_0$,
if Eve wins from $v_0$ then there exists a positional winning strategy from $v_0$.
\end{definition}

Given a game $\game$, a vertex $v_0$, and a positional strategy $\sigma$ we let $\game[\sigma,v_0]$ denote the graph obtained by restricting $\game$ to vertices reachable from $v_0$ by playing $\sigma$ and to the moves prescribed by $\sigma$.
Formally, the set of vertices and edges is
\[
\begin{array}{lll}
V[\sigma,v_0] & = & \set{v \in V : \text{there exists a path from } v_0 \text{ to } v \text{ consistent with } \sigma}, \\
E[\sigma,v_0] & = & \set{(v,c,v') \in E : v \in \VA \text{ or } \left( v \in \VE \text{ and }\right.
\left. \sigma(v) = (v,c,v') \right)} \\
& \cap & V[\sigma,v_0] \times C \times V[\sigma,v_0].
\end{array}
\]

In this paper we will consider prefix independent objectives for technical convenience.

\begin{definition}
We say that an objective $\Omega$ is prefix independent if for all $u \in C^*$ and $v \in C^\omega$, 
we have $uv \in \Omega$ if and only if $v \in \Omega$.
\end{definition}

\begin{lemma}
Let $\Omega$ be a prefix independent objective, $\game$ a game, $v_0$ a vertex, and $\sigma$ a positional strategy.
Then $\sigma$ is winning from $v_0$ if and only if the graph $\game[\sigma,v_0]$ satisfies $\Omega$
and does not contain any sink controlled by Eve.
\end{lemma}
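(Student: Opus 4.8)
The plan is to prove both implications by exploiting a tight correspondence between the maximal paths consistent with $\sigma$ starting from $v_0$ and the paths of the restricted graph $\game[\sigma,v_0]$. The basic observation, which I would establish first by a straightforward induction on length, is that a path starting from $v_0$ is consistent with $\sigma$ if and only if it is a path of $\game[\sigma,v_0]$: at an Adam vertex every outgoing edge of $\game$ is retained in $E[\sigma,v_0]$ and imposes no constraint on consistency, whereas at an Eve vertex both consistency and membership in $E[\sigma,v_0]$ amount to taking exactly the edge prescribed by $\sigma$. Together with the fact that every vertex of $V[\sigma,v_0]$ is by definition reachable from $v_0$ by a consistent path, this correspondence is what drives both directions.

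For the forward implication, suppose $\sigma$ is winning from $v_0$. First I would rule out Eve-controlled sinks in $\game[\sigma,v_0]$: if such a sink $v$ existed, pick a finite consistent path $\pi$ from $v_0$ to $v$; since $v \in \VE$ has no outgoing edge in $E[\sigma,v_0]$, the strategy $\sigma$ must be undefined at $v$ (a defined move would be retained in $E[\sigma,v_0]$ and contradict that $v$ is a sink), so $\pi$ is a finite maximal consistent path ending in an Eve-controlled vertex, which is losing and contradicts that $\sigma$ is winning. Next, to show $\game[\sigma,v_0]$ satisfies $\Omega$, I would take an arbitrary infinite path $\rho$ of $\game[\sigma,v_0]$, prepend a finite consistent path $\tau$ from $v_0$ to the first vertex of $\rho$ (which exists since that vertex lies in $V[\sigma,v_0]$), and observe that $\tau\rho$ is an infinite maximal path consistent with $\sigma$, hence winning, so that $\col(\tau\rho) \in \Omega$; prefix independence then yields $\col(\rho) \in \Omega$.

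For the backward implication, suppose $\game[\sigma,v_0]$ satisfies $\Omega$ and has no Eve-controlled sink, and let $\pi$ be any maximal path consistent with $\sigma$ from $v_0$. By the correspondence above, $\pi$ is a path of $\game[\sigma,v_0]$. If $\pi$ is infinite then it satisfies $\Omega$ because the graph does, hence it is winning. If $\pi$ is finite it ends in some vertex $v$, and maximality forces $v$ to admit no consistent continuation. When $v \in \VE$ this means $\sigma$ is undefined at $v$, so $v$ is an Eve-controlled sink of $\game[\sigma,v_0]$, contradicting the hypothesis; hence $v \in \VA$, and maximality then forces $v$ to be a sink of $\game$, which is winning by convention. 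In all cases $\pi$ is winning, so $\sigma$ is winning from $v_0$.

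The only genuinely delicate point, and the step I would treat most carefully, is the prefix-independence argument in the forward direction: an infinite path of $\game[\sigma,v_0]$ need not start at $v_0$, so the reduction to a winning (hence $\Omega$-satisfying) play relies essentially on prepending a consistent prefix and discarding it via prefix independence, and this is precisely where the hypothesis on $\Omega$ is used. The remaining work is bookkeeping around finite maximal paths, where one must consistently apply the convention that Eve-controlled sinks are losing while Adam-controlled sinks are winning; the no-Eve-sink condition appearing in the statement is exactly what makes the two sides match.
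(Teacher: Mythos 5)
The paper states this lemma without giving a proof, so there is nothing to compare against; your argument is correct and is exactly the standard one the authors presumably have in mind. The key ingredients are all in place: the correspondence between paths of $\game[\sigma,v_0]$ from $v_0$ and plays consistent with $\sigma$ (which, as you implicitly use when concatenating $\tau\rho$, relies on $\sigma$ being positional), the use of prefix independence to discard the prepended prefix, and the correct handling of the sink conventions in both directions.
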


\subsection*{Solving games}
\paragraph*{Decision problem}
The decision problem we consider in this paper, called solving a game, is the following:
given a game $\game$ and an initial vertex $v_0$, does Eve have a winning strategy from $v_0$ in $\game$?
Each objective yields a class of games, so we speak for instance of ``solving mean payoff games''.

\paragraph*{Computational model}
The complexity of solving a game depends on a number of parameters originating either from the underlying graph or the objective.
Some of the objectives involve rational numbers.
The typical parameters from the underlying graph are the number $n$ of vertices and the number $m$ of edges.

We use the classical Random Access Model (RAM) of computation with fixed word size,
which is the size of a memory cell on which arithmetic operations take constant time.
We specify for each class of objectives the word size.

\subsection*{Reduction to safety games}
\paragraph*{Safety games}
The safety objective $\Safe$ is defined over the set of colours $C = \set{\varepsilon}$ by $\Safe = \set{\varepsilon^\omega}$: 
in words, all infinite paths are winning, so losing for Eve can only result from reaching a sink that she controls.
Since there is a unique colour, when manipulating safety games we ignore the colour component for edges.

Note that in safety games, strategies can equivalently be defined as maps to $V$ (and not $E$):
only the target of an edge matters when the source is fixed, since there is a unique colour.
We use this abuse of notations for the sake of simplicity and conciseness.

\begin{theorem}\label{thm:safetygames}
There exists an algorithm in the RAM model with word size $w = \log(n)$ computing the set of winning vertices of a safety game running in time $O(m)$ and space $O(n)$.
\end{theorem}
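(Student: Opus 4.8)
The plan is to compute the winning set as the complement of Adam's attractor to the set of losing sinks. Since the objective is $\Safe$, every infinite play is winning and the only way for Eve to lose is to reach a sink she controls; so let $T = \set{v \in \VE : v \text{ is a sink}}$ be the set of immediately losing positions. I would define $\mathrm{Attr} \subseteq V$, the set of positions from which Adam can force the play to reach $T$, as the least fixed point of the monotone operator that adds an Adam vertex as soon as \emph{one} of its outgoing edges leads into the current set, and adds an Eve vertex once \emph{all} of its outgoing edges lead into the current set. Note that an Eve sink satisfies the universal condition vacuously and hence lies in $\mathrm{Attr}$, consistently with $T \subseteq \mathrm{Attr}$. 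The theorem then follows from the claim that Eve wins from $v_0$ if and only if $v_0 \notin \mathrm{Attr}$.

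To prove correctness I would argue two directions, which also yield a positional winning strategy. First, from a vertex outside $\mathrm{Attr}$ Eve can stay outside forever: an Eve vertex $v \notin \mathrm{Attr}$ is not a sink (Eve sinks are in $T$) and, by definition of $\mathrm{Attr}$, has at least one outgoing edge leaving $\mathrm{Attr}$, while an Adam vertex $v \notin \mathrm{Attr}$ has all its outgoing edges leaving $\mathrm{Attr}$ (and if it has none it is an Adam sink, which is winning for Eve). Fixing such an escaping edge at each Eve vertex defines a positional strategy under which every maximal consistent play remains outside $\mathrm{Attr}$, hence never visits $T$, and is therefore either infinite (winning by safety) or ends in an Adam sink (winning); so $v$ is a winning vertex. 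Conversely, from a vertex in $\mathrm{Attr}$ a straightforward induction on the stage at which a vertex is added shows that Adam has a strategy forcing the play into $T$ in finitely many steps, so Eve loses.

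For the algorithm I would use the standard backward attractor computation. I maintain for each Eve vertex a counter initialised to its out-degree $\outdeg(v)$, a membership flag for $\mathrm{Attr}$, and a worklist initialised with the vertices of $T$. Whenever a vertex $u$ is extracted from the worklist, I scan its incoming edges: an Adam predecessor that is not yet flagged is flagged and pushed onto the worklist, whereas for an Eve predecessor I decrement its counter and push it (flagged) precisely when the counter reaches $0$. Each edge is inspected exactly once, namely when its head is extracted, and no vertex is processed after being flagged; this gives running time $O(m)$, while the counters, flags and worklist use $O(n)$ additional space. At termination $\mathrm{Attr}$ is the set of flagged vertices and the reported winning set is its complement $V \setminus \mathrm{Attr}$.

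I expect the main obstacle to be the complexity bookkeeping rather than the underlying idea. The $O(m)$ time bound relies on scanning \emph{incoming} edges, so it presupposes access to the reverse adjacency and on never revisiting a flagged vertex; the $O(n)$ space bound requires checking that the only auxiliary structures are the per-vertex counters and flags. The correctness subtlety most worth verifying carefully is the handling of sinks, ensuring that Eve sinks are placed in $T$ (and thus in $\mathrm{Attr}$) while Adam sinks correctly remain outside $\mathrm{Attr}$ and are counted as winning.
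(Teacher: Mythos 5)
Your proposal is correct: it is the standard linear-time attractor argument for safety games, which is exactly the classical result the paper invokes here without proof (the theorem is stated as known folklore, with no proof given in the paper). Your handling of the two kinds of sinks — Eve sinks seeding the attractor, Adam sinks excluded from it and counted as winning — and the counter-based bookkeeping giving $O(m)$ time and $O(n)$ space are all as expected; there is nothing to correct.
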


\paragraph*{Reduction using safety automata}
Let $\auto = (Q, q_0, \delta)$ be an automaton and $\game$ a game with objective $\Omega$.
We define the chained game $\game \chain \auto$ as the safety game with vertices $V' = \VE' \uplus \VA'$ and edges $E'$ given by
\[
\begin{array}{lll}
\VE' & = & \left(\VE \times Q\right)\ \cup \set{\bot}, \\
\VA' & = & \VA \times Q, \\
E' & = & \{((v,q), \varepsilon, (v',\delta(q,c))) : q \in Q, (v,c,v') \in E, \delta(q,c) \text{ is defined}\} \\
 & \cup & \{((v,q), \varepsilon, \bot) : q \in Q, (v,c,v') \in E, \delta(q,c) \text{ is not defined}\}.
\end{array}
\]

In words, from $(v,q) \in V \times Q$, the player whom $v$ belongs to chooses an edge $(v,c,v') \in E$, and the game progresses to $(v',\delta(q,c))$ if $q$ has an outgoing edge with colour $c$ in $\auto$, and to $\bot$ otherwise, which is losing for Eve.

Note that the obtained game $\game \chain \auto$ is a safety game: Eve wins if she can play forever or end up in a sink controlled by Adam.

\begin{lemma}\label{lem:productharder}
Let $\Omega$ be an objective, $\game$ a game with objective $\Omega$, $\auto$ an automaton, and $v_0 \in V$.
If $\auto$ satisfies $\Omega$ and Eve wins from $(v_0,q_0)$ in $\game \chain \auto$, 
then Eve wins from $v_0$ in $\game$.
\end{lemma}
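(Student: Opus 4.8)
The plan is to transform a winning strategy in the safety game $\game \chain \auto$ into a winning strategy in $\game$ by projecting onto the first component, while using the automaton to silently track the state alongside each play. First I would fix a winning strategy $\sigma'$ for Eve from $(v_0,q_0)$ in $\game \chain \auto$, and associate to every finite path $\pi = v_0 c_0 v_1 \dots v_i$ in $\game$ from $v_0$ its lift $\hat\pi = (v_0,q_0)\,\varepsilon\,(v_1,q_1)\,\varepsilon\,\dots\,(v_i,q_i)$ in the chained game, where $q_{j+1} = \delta(q_j,c_j)$. Assuming each $\delta(q_j,c_j)$ is defined (so that the lift avoids $\bot$), I set $\sigma(\pi) = (v_i,c,v')$ whenever $\last(\pi) = v_i \in \VE$ and $\sigma'(\hat\pi)$ is the edge $((v_i,q_i),\varepsilon,(v',\delta(q_i,c)))$. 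A first routine check, by induction on the length of $\pi$, is that the lift of any path consistent with $\sigma$ is itself consistent with $\sigma'$.

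The crucial point is to argue that along plays consistent with $\sigma$ the lift never reaches $\bot$, so that $\sigma$ is well defined and the colours read genuinely trace a path in $\auto$. This follows from $\sigma'$ being winning: $\bot$ is a sink controlled by Eve and hence losing, so no play consistent with $\sigma'$ reaches it. Concretely, at an Adam vertex $(v_i,q_i)$ reachable via $\sigma'$ none of Adam's moves may lead to $\bot$, which forces $\delta(q_i,c)$ to be defined for every edge $(v_i,c,v')$ leaving $v_i$; and at an Eve vertex the edge prescribed by $\sigma'$ cannot go to $\bot$ either. Hence for any finite $\pi$ consistent with $\sigma$ all transitions $\delta(q_j,c_j)$ are defined, the lift $\hat\pi$ is a well-formed play avoiding $\bot$, and $\sigma$ is defined whenever $\last(\pi) \in \VE$ is not a sink.

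It then remains to verify that every maximal path $\pi$ consistent with $\sigma$ is winning. If $\pi$ is finite and maximal, I would show its last vertex lies in $\VA$ and is a sink: an Eve sink would lift to an Eve sink $(v_i,q_i)$ reachable by $\sigma'$, which is losing and contradicts that $\sigma'$ wins; while at a non-sink Eve vertex $\sigma(\pi)$ would be defined by the previous paragraph, contradicting maximality. Thus $\last(\pi)$ is an Adam sink, which is winning for Eve. If $\pi$ is infinite, its colour sequence $\col(\pi) = c_0 c_1 \dots$ labels the infinite path $q_0 \xrightarrow{c_0} q_1 \xrightarrow{c_1} \cdots$ in $\auto$ from $q_0$, so $\col(\pi) \in L(\auto)$; since $\auto$ satisfies $\Omega$ this gives $\col(\pi) \in \Omega$, i.e.\ $\pi$ satisfies $\Omega$ and is winning. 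The main obstacle is the bookkeeping in the second paragraph: extracting, from the single global property that no consistent play reaches $\bot$, the local guarantee that the automaton transition remains defined after both Eve's and Adam's moves, so that the projected strategy is total on the plays it needs to handle.
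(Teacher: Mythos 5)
Your proposal is correct and follows essentially the same route as the paper's proof: you construct the projected strategy by deterministically lifting each play to the chained game (the paper calls this the ``simulation property''), verify consistency by induction, and then split the winning argument into the finite-maximal case (Adam sink) and the infinite case (run in $\auto$, hence $\col(\pi) \in L(\auto) \subseteq \Omega$). Your second paragraph, making explicit that $\sigma'$ being winning rules out reaching $\bot$ after both Eve's and Adam's moves, is a point the paper's write-up leaves somewhat implicit, but it is the same underlying argument.
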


\begin{proof}
Let $\sigma'$ be a winning strategy from $(v_0, q_0)$ in $\game \chain \auto$.
We construct a strategy $\sigma$ from $v_0$ in $\game$ which simulates $\sigma'$ in the following sense,
which we call the simulation property:
\begin{center}
for any path $\pi = v_0 c_0 \dots c_{i-1} v_i$ in $\game$ consistent with $\sigma$, \\
there exists a path $\pi' = (v_0, q_0) \dots (v_i, q_i)$ in $\game \chain \auto$ consistent with $\sigma'$.
\end{center}

We define $\sigma$ over finite paths $v_0c_0 \dots c_{i-1}v_i$ with $v_i \in \VE$ by induction over $i$, so that the simulation property holds.
For $i = 0$, $\pi'=(v_0,q_0)$ is a path in $\game \chain \auto$ consistent with $\sigma'$.
Let $\pi = v_0 c_0 \dots c_{i-1} v_{i}$ be a path in $\game$ consistent with $\sigma$ and $v_i \in \VE$, 
we want to define $\sigma(\pi)$.
Thanks to the simulation property there exists a path 
$\pi' = (v_0, q_0) \dots (v_{i}, q_{i})$ in $\game \chain \auto$ consistent with $\sigma'$.
Then $(v_i,q_i) \in \VE'$, and since $\sigma'$ is winning it is defined over $\pi'$, 
let us write $\sigma'(\pi') = (v_{i+1}, \delta(q_i, c_i))$ with $(v_i,c_i,v_{i+1}) \in E$.
We set $\sigma(\pi) = v_{i+1}$.

To conclude the definition of $\sigma$ we need to show that the simulation property extends to paths
of length $i+1$.
Let $\pi_{i+1} = \pi\ c_i v_{i+1} = v_0 c_0 \dots v_i c_i v_{i+1}$ be consistent with $\sigma$.
We apply the simulation property to $\pi$ to construct 
$\pi' = (v_0, q_0) \dots (v_{i}, q_{i})$ a path in $\game \chain \auto$ consistent with $\sigma'$.
Let us consider $\pi'_{+1} = \pi'\ (v_{i+1}, \delta(q_i, c_i))$ with $(v_i,c_i,v_{i+1}) \in E$, 
we claim that $\pi'_{+1}$ is consistent with $\sigma'$.
Indeed, if $v_i \in \VA$, there is nothing to prove, and if $v_i \in \VE$, this holds by construction: 
since $\sigma(\pi) = v_{i+1}$ we have $\sigma'(\pi') = (v_{i+1}, \delta(q_i, c_i))$.
This concludes the inductive proof of the simulation property together with the definition of $\sigma$.

\vskip1em
We now prove that $\sigma$ is a winning strategy from $v_0$. 
Let $\pi=v_0 c_0 v_1 \dots$ be a maximal consistent path with $\sigma$, and let $\pi' = (v_0,q_0)(v_1,q_1) \dots$ be the corresponding path in $\game \chain \auto$ consistent with $\sigma'$.
Let us first assume that $\pi$ is finite, and let $v_i = \last(\pi)$. If $v_i \in \VE$, then by construction $\sigma$ is defined over $\pi$, so the path $v_0c_0 \dots v_i c_i v_{i+1}$, where $\sigma(\pi)=v_{i+1}$ is consistent with $\sigma$, contradicting maximality of $\pi$. If however $v_i \in \VA$, then by maximality of $\sigma$, $v_i$ must be a sink, hence $\pi$ is winning.
Now if $\pi$ is infinite then so is $\pi'$, and then $q_0 c_0 q_1 c_1 \dots$ is a path in $\auto$, so $\col(\pi) = c_0 c_1 \dots \in L(\auto) \subseteq \Omega$, and $\pi$ is winning.
We conclude that $\sigma$ is a winning strategy from $v_0$ in $\game$.
\end{proof}

It is not hard to see that if $\auto$ is deterministic and recognises exactly $\Omega$, then $\game$ and $\game \chain \auto$ are equivalent.
However, for many objectives $\Omega$ (for instance, parity or mean payoff objectives), a simple topological argument shows that such deterministic automata do not exist.
Separating automata are defined by introducing $n$ as a parameter, and relaxing the condition $L(\auto)=\Omega$ to the weaker condition $\Omega^{\mid n} \subseteq L(\auto) \subseteq \Omega$, where $\Omega^{\mid n}$ is the set of infinite sequence of colours that label paths from graphs of size at most $n$ satisfying $\Omega$.
Formally\footnote{Here we use the fact that $\Omega$ is prefix independent to simplify the definition: we are considering infinite paths from any vertex of the graph. To extend this definition beyond prefix independent objectives we would need to fix an initial vertex.},
\[
\Omega^{\mid n} = \{\col(\pi) \mid \pi \in \IPath(G), G \text{ has size at most $n$ and satisfies $\Omega$}\}.
\]

\begin{definition}
An $(n, \Omega)$-separating automaton $\auto$ is an automaton such that
\[
\Omega^{\mid n} \subseteq L(\auto) \subseteq \Omega.
\]
\end{definition}

The definition given here differs from the original one given by Boja{\'n}czyk and Czerwi{\'n}ski~\cite{BC18},
who use a different relaxation $\Omega_{\mid n}$ satisfying $\Omega^{\mid n} \subseteq \Omega_{\mid n} \subseteq \Omega$.
Therefore a separating automaton in the sense of~\cite{BC18} is also a separating automaton in our sense (but not conversely). 

\begin{theorem}
\label{thm:separatingautomata}
Let $\Omega$ be a positionally determined objective, $\auto$ an $(n,\Omega)$-separating automaton, $\game$ a game of size $n$, and $v_0 \in V$. 
Then Eve wins from $v_0$ in $\game$ if and only if she wins from $(v_0,q_0)$ in $\game \chain \auto$.
\end{theorem}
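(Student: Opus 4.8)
The statement is an equivalence, and I would prove the two implications separately. The implication from right to left, namely that winning from $(v_0,q_0)$ in $\game \chain \auto$ entails winning from $v_0$ in $\game$, is exactly Lemma~\ref{lem:productharder}: being an $(n,\Omega)$-separating automaton gives $L(\auto) \subseteq \Omega$, which is precisely what the proof of that lemma requires (it only uses that the colour sequence of an infinite play, read from $q_0$, lies in $L(\auto) \subseteq \Omega$). So no new work is needed for this direction, and notably it uses neither positional determinacy nor the lower inclusion $\Omega^{\mid n} \subseteq L(\auto)$.

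All the content is therefore in the converse: if Eve wins from $v_0$ in $\game$, she wins from $(v_0,q_0)$ in $\game \chain \auto$. Since $\Omega$ is positionally determined, I would fix a positional winning strategy $\widehat\sigma$ from $v_0$, let $\sigma$ be the strategy it induces, and pass to the restricted graph $\game[\sigma,v_0]$. By the lemma characterising winning positional strategies through this graph, $\game[\sigma,v_0]$ satisfies $\Omega$ and contains no sink controlled by Eve. Crucially $\game[\sigma,v_0]$ has at most $n$ vertices, being a subgraph of $\game$, which has size $n$; hence every infinite path $\rho$ of $\game[\sigma,v_0]$ witnesses $\col(\rho) \in \Omega^{\mid n} \subseteq L(\auto)$, the last inclusion being the second half of the separating condition. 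This is the only place where $\Omega^{\mid n} \subseteq L(\auto)$ and the size bound on $\game$ enter.

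The plan is then to let Eve follow $\widehat\sigma$ in the $\game$-component while the $Q$-component tracks the automaton run, and to show that $\bot$ is never reached. Concretely, I would maintain the invariant that every play prefix consistent with this strategy projects onto a finite path $\pi = v_0 c_0 \dots v_i$ of $\game[\sigma,v_0]$ consistent with $\sigma$, whose automaton component equals $\delta^*(q_0,\col(\pi))$ and is in particular defined. The main obstacle, and the heart of the proof, is the inductive step avoiding $\bot$: for the next edge $(v_i,c_i,v_{i+1})$ — prescribed by $\widehat\sigma$ if $v_i \in \VE$, or chosen arbitrarily by Adam if $v_i \in \VA$ — one must check that $\delta(q_i,c_i)$ is defined. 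The key observation is that \emph{all} of Adam's outgoing edges from a reachable vertex are retained in $\game[\sigma,v_0]$, so $(v_i,c_i,v_{i+1})$ is an edge of $\game[\sigma,v_0]$ regardless of who moves; extending $\pi\,c_i v_{i+1}$ to an infinite path $\rho$ of $\game[\sigma,v_0]$ — possible whenever $v_{i+1}$ is not a sink, as $\game[\sigma,v_0]$ has no Eve sink — yields $\col(\rho)\in \Omega^{\mid n}\subseteq L(\auto)$, and since $c_0\cdots c_i$ is a prefix of this accepted word, $\delta^*(q_0,c_0\cdots c_i)=\delta(q_i,c_i)$ is defined.

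It then remains to observe that the resulting strategy is winning in the safety game $\game \chain \auto$: a maximal consistent play is either infinite, so Eve plays forever and wins, or ends at a product sink $(v,q)$, which by the absence of Eve sinks in $\game[\sigma,v_0]$ must come from an Adam sink $v$ and is therefore controlled by Adam and winning for Eve. The one delicate boundary case is an edge leading to such an Adam sink, where the extension-to-an-infinite-path argument does not apply; under the standard convention that arenas have no dead ends this case is vacuous, and otherwise it is dealt with by completing $\auto$ so that it does not reject the finite colour sequences of winning plays. I expect this boundary case to be the only genuinely fiddly part, the infinite case being clean.
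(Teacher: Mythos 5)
Your proposal is correct and follows essentially the same route as the paper: the right-to-left direction is delegated to Lemma~\ref{lem:productharder}, and the converse fixes a positional winning strategy, passes to $\game[\sigma,v_0]$, and uses $\Omega^{\mid n} \subseteq L(\auto)$ together with the size bound to show that colour sequences of consistent plays are prefixes of accepted words, so $\bot$ is never reached and maximal plays in the safety game are winning. If anything you are more careful than the paper, which silently assumes that every finite consistent path in $\game[\sigma,v_0]$ extends to an infinite one and does not discuss the Adam-sink boundary case you flag.
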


\begin{proof}
The ``if'' directly follows from Lemma~\ref{lem:productharder}. 
Conversely, assume that Eve wins from $v_0$ in $\game$, and let $\sigma$ be a strategy from $v_0$ in $\game$, 
which we choose to be positional. 
As explained in the definition of safety games, without loss of generality we can see $\sigma$ as a function $\sigma: \VE \to V$.
We define a positional strategy $\sigma' : \VE' \to V'$ by
\[
\sigma'(v,q) = (\sigma(v),\delta(q,c)) \text{ with } (v,c,v') \in E.
\]
Let $\pi' = (v_0,q_0) \dots (v_i,q_i)$, with for all $j \leq i, e_j = (v_j,c_j,v_{j+1}) \in E$, 
be a finite path in $\game \chain \auto$ consistent with $\sigma'$, and let $\pi = v_0 c_0 \dots c_i v_{i+1}$.
Then by definition of $\sigma'$, $\pi$ is consistent with $\sigma$, which rephrases a being a path in $\game[\sigma, v_0]$.
Now, $\game[\sigma, v_0]$ is a graph satisfying $\Omega$, so $\col(\pi)=c_0 \dots c_i$ is a prefix of a word in $\Omega^{\mid n} \subseteq L(\auto)$.
This implies that $\delta^*(q_0,c_0 \dots c_i)$ is well defined, hence so is $\delta(q_i, c_i)$.
In particular, $\sigma'$ induces a strategy in $\game \chain \auto$ from $(v_0, q_0)$. We now prove that it is winning.

Let $\pi' = (v_0,q_0) \dots$ be a maximal path in $\game$ consistent with $\sigma'$, which we assume to be finite for contradiction.
Let $\last(\pi') = (v_i,q_i)$ be a sink, and define $\pi = v_0 c_0 \dots c_{i-1} v_i$.
By definition of $\pi'$, it holds that $\pi$ is a path in $\game$ which is consistent with $\sigma$.
Since $\pi$ is finite and $\sigma$ is winning $v_i$ cannot be a sink, it has an outgoing edge $e_i=(v_i,c,v') \in E$.
Then $((v_i,q_i), \varepsilon, (v_{i+1}, \delta(q_i, c_i)) \in E'$, so $(v_i,q_i)$ is not a sink: a contradiction.
Hence $\pi'$ is an infinite path in the safety game $\game \chain \auto$; it is winning by definition.
\end{proof}

\subsection*{Existing constructions for parity and mean payoff games}
\paragraph*{Parity games}
The parity objective is defined over the set of colours $[0,d] \subseteq \N$, as follows:
\[
\Parity_d = \set{w \in [0,d]^\omega : \text{the largest priority appearing infinitely many times in $w$ is even}}.
\]

As mentioned above, the definition of separating automata given in~\cite{BC18} is slightly different, but the result below indeed holds for both definitions, as explained in~\cite{CDFJLP18}.

\begin{theorem}[\cite{BC18,CDFJLP18}]
\label{thm:separatingautomataparity}
Let $n,d \in \N$.
There exists an $(n,\Parity_d)$-separating automaton of size 
\[
O\left(n \cdot \binom{\lceil \log(n) \rceil + d/2 - 1}{\lceil \log(n) \rceil}\right).
\]
\end{theorem}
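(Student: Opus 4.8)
The plan is to route the proof through universal trees, exploiting the equivalence between separating automata and universal structures alluded to in the introduction. First I would recall the relevant combinatorial object: an \emph{$(n,h)$-universal tree} is an ordered rooted tree of height $h$ into which every ordered tree of height $h$ with at most $n$ leaves embeds by an order- and ancestor-preserving map. The crux is the Jurdzi{\'n}ski--Lazi{\'c} construction, which produces such a tree of size $O\!\left(n \cdot \binom{\lceil \log n \rceil + h - 1}{\lceil \log n \rceil}\right)$; instantiating $h = \lceil d/2 \rceil$ gives exactly the bound in the statement. I would take this tree-size bound as a black box and reduce the theorem to the task of turning a universal tree into a separating automaton.

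Second I would build the automaton $\auto$ from a universal tree $T$ of height $h = \lceil d/2 \rceil$. Its states are the nodes of $T$, so that $|Q|$ equals the size of $T$; the initial state $q_0$ is the maximal leaf; and the transition on reading a priority $c \in [0,d]$ implements the parity progress-measure update of value iteration, transcribed onto the nodes of $T$. Concretely, an even priority lets the automaton stay put or move upward (resetting the coordinates below the relevant level), while an odd priority $c$ forces a strict decrease at level $\lfloor c/2 \rfloor$; when no admissible target node exists the transition is left undefined, which is precisely what makes $\auto$ a safety automaton able to ``reject''.

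Third I would establish the two inclusions $\Parity_d^{\mid n} \subseteq L(\auto) \subseteq \Parity_d$. For soundness ($L(\auto) \subseteq \Parity_d$), take an infinite run and suppose for contradiction that the largest priority seen infinitely often were odd, say $2k+1$; beyond the point where no higher priority occurs, every occurrence of $2k+1$ forces a strict decrease of the coordinate at level $k$, which smaller priorities never reset, so the node would decrease infinitely often inside a finite well-order — impossible. For completeness ($\Parity_d^{\mid n} \subseteq L(\auto)$), take a graph $G$ of size at most $n$ satisfying $\Parity_d$; since all its cycles are even-dominated, $G$ carries a parity progress measure labelling its vertices by the nodes of some tree of height $h$ with at most $n$ leaves. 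Universality of $T$ embeds this tree, and hence the labelling, into $T$; reading the colours $\col(\pi)$ of any $\pi \in \IPath(G)$ along this embedded labelling then follows only defined transitions of $\auto$, so $\col(\pi) \in L(\auto)$.

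The main obstacle lies entirely on the completeness/universality side. The construction of $\auto$ and the soundness argument are essentially bookkeeping once the update rule is fixed; by contrast, completeness depends on the existence of a \emph{small} universal tree meeting the stated bound, which requires both the standard progress-measure/lifting argument (every bounded parity-satisfying graph admits a measure into a tree with at most $n$ leaves) and the Jurdzi{\'n}ski--Lazi{\'c} quasipolynomial construction, whose matching size bound is the genuinely hard combinatorial ingredient. I would therefore cite that construction rather than reprove it, and concentrate the argument on the two inclusions above.
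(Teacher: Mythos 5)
The paper does not prove this theorem: it is imported as a black box from~\cite{BC18,CDFJLP18}, and your universal-tree route (Jurdzi\'nski--Lazi\'c tree of height $d/2$, states given by the tree, transitions implementing the progress-measure update, soundness by well-foundedness, completeness via a progress measure on any size-$n$ graph satisfying $\Parity_d$ embedded into the universal tree) is precisely the proof given in those cited works, so your approach matches the source. The only slip is taking the states to be all \emph{nodes} of $T$ rather than its \emph{leaves}: since the size of a universal tree is counted in leaves, that would cost an extra factor of about $d/2$ which is absent from the stated bound, and the standard construction (consistent with your choice of the maximal leaf as initial state) indeed uses only the leaves.
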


\paragraph*{Mean payoff games}
The mean payoff objective is defined over the set of colours $[-N,N] \subseteq \Z$, as follows:
\[
\MP{N} = \set{w \in [-N,N]^\omega : \liminf_n \frac{1}{n} \sum_{i = 1}^n w_i \ge 0}.
\]

\begin{theorem}[\cite{FGO18}]
\label{thm:separatingautomatamp}
Let $n,N \in \N$. There exists an $(n,\MP{N})$-separating automaton of size $O(nN)$.
\end{theorem}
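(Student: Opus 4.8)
The plan is to exhibit an explicit \emph{energy automaton} and verify the two inclusions in the definition of a separating automaton. Set $M = (n-1)N$ and take as state set the energy levels $Q = \set{0,1,\dots,M}$, with initial state $q_0 = M$ and transition function
\[
\delta(e,c) = \begin{cases} \min(e+c,\, M) & \text{if } e + c \ge 0, \\ \text{undefined} & \text{otherwise,}\end{cases}
\]
for $e \in Q$ and $c \in [-N,N]$. Intuitively the automaton maintains a credit, capped at $M$, and rejects (has no transition) exactly when reading the next weight would push the credit below zero. Its size is $M+1 = O(nN)$, so it remains to check $\MP{N}^{\mid n} \subseteq L(\auto) \subseteq \MP{N}$.

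For the inclusion $L(\auto) \subseteq \MP{N}$, I would show that accepted words in fact have partial sums bounded below. Given an infinite accepting run $e_0,e_1,\dots$ on $w = w_1 w_2 \cdots$, capping can only decrease the credit, so $e_k \le e_{k-1} + w_k$ for every $k$; telescoping gives $e_k - e_0 \le \sum_{i=1}^{k} w_i$, hence $\sum_{i=1}^{k} w_i \ge e_k - e_0 \ge -M$. Since $M$ is a fixed constant, $\liminf_k \frac1k \sum_{i=1}^{k} w_i \ge \liminf_k \left(-M/k\right) = 0$, so $w \in \MP{N}$.

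For the inclusion $\MP{N}^{\mid n} \subseteq L(\auto)$, I would invoke the classical fact that a graph $G$ of size at most $n$ with weights in $[-N,N]$ satisfies $\MP{N}$ if and only if it has no cycle of negative total weight, which is in turn equivalent (Bellman--Ford, or feasibility of difference constraints) to the existence of a potential $\phi : V(G) \to \Z$ with $\phi(v') \le \phi(v) + c$ for every edge $(v,c,v') \in E$. Normalising $\phi$ via shortest-path distances from a fresh source, one can take values in $\set{0,\dots,(n-1)N} = \set{0,\dots,M}$, since a simple path has at most $n-1$ edges of weight $\ge -N$. Given any infinite path $\pi = v_0 c_0 v_1 c_1 \cdots$ in such a $G$, I would then prove by induction the invariant $e_k \ge \phi(v_k)$ for the run of $\auto$ on $\col(\pi)$: it holds at $k=0$ as $e_0 = M \ge \phi(v_0)$, and if $e_k \ge \phi(v_k)$ then $e_k + c_k \ge \phi(v_k) + c_k \ge \phi(v_{k+1}) \ge 0$, so the transition is defined and $e_{k+1} = \min(e_k + c_k, M) \ge \phi(v_{k+1})$. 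Thus the run never gets stuck and $\col(\pi) \in L(\auto)$.

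The telescoping and the induction are routine; the structural crux is the input to the second inclusion, namely that the absence of negative cycles yields a potential whose range is bounded by $(n-1)N$ — this bounded range is precisely what keeps the automaton of size $O(nN)$ rather than unbounded. I expect the only genuinely delicate point to be pinning down the equivalence between $G$ satisfying $\MP{N}$ (a $\liminf$ condition on \emph{all} infinite paths) and the purely combinatorial absence of negative cycles. This follows by decomposing any finite prefix of a path of $G$ into simple cycles (each of nonnegative weight) plus a simple remainder of length $< n$, which yields the uniform lower bound $-(n-1)N$ on partial sums along paths of $G$ and hence nonnegative mean payoff.
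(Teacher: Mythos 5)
Your construction is correct, and it is essentially the standard one: the paper itself gives no proof of this theorem (it is imported verbatim from~\cite{FGO18}), and the energy automaton with capped credit in $\set{0,\dots,(n-1)N}$, justified via the no-negative-cycle/potential-function argument, is exactly the construction of that reference (there phrased via $(n,\MP{N})$-universal graphs, which are linear orders of size $O(nN)$). Both inclusions are argued soundly, including the delicate equivalence between the $\liminf$ condition on all infinite paths and the absence of negative cycles.
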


\section{Disjunction of parity and mean payoff}
\label{sec:disjmeanpayoffparity}
We define the objective $\Parity_d \vee \MP{N}$ referred to as `disjunction of parity and mean payoff'.
The set of colours is $[0,d] \times [-N,N]$.
For $w \in ([0,d] \times [-N,N])^\omega$ we write $w_P \in [0,d]^\omega$ for the projection on the first component
and $w_{MP} \in [-N,N]^\omega$ for the projection on the second component.
\[
\Parity_d \vee \MP{N} = \set{w \in ([0,d] \times [-N,N])^\omega : w_P \in \Parity_d \vee w_{MP} \in \MP{N}}.
\]
We refer to Subsection~\ref{subsec:complexitydisjunctionsparitymeanpayoff} 
for a discussion on existing results.

\begin{theorem}[\cite{ChatterjeeHJ05}]
Disjunctions of parity and mean payoff objectives are prefix independent and positionally determined.
\end{theorem}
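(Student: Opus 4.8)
The plan is to establish the two properties separately, the first being routine and the second being the crux. For \emph{prefix independence}, I would observe that both atomic objectives are already prefix independent: the largest priority occurring infinitely often in $w_P$ is unchanged by prepending a finite word, and $\liminf_n \frac1n \sum_{i\le n} w_i$ is insensitive to a finite prefix, since the omitted terms are bounded and contribute $O(1/n)$ after division by $n$. Consequently, for $u \in C^*$ and $v \in C^\omega$ we have $uv \in \Parity_d \vee \MP{N}$ iff $v_P \in \Parity_d$ or $v_{MP} \in \MP{N}$ iff $v \in \Parity_d \vee \MP{N}$, so the disjunction is prefix independent. I would simply spell this out. Throughout I would use as black boxes the classical \emph{bi}-positional determinacy of parity games and of mean payoff games.

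The substantial part is \emph{positional determinacy}, and the naive region decomposition does not suffice. Writing $U$ for the region where Adam wins the pure mean payoff game (a trap for Eve, on which Adam has a positional strategy forcing $w_{MP} \notin \MP{N}$), it is \textbf{not} true that Eve wins the disjunction on $U$ exactly when she wins pure parity on $U$. Indeed, if Eve did not win pure parity on $U$, then by determinacy of parity Adam would win the complementary parity condition there, and he also defeats mean payoff on $U$; yet these are two \emph{separate} strategies, and forcing the \emph{conjunction} ``parity fails and mean payoff fails'' is a genuinely harder, mean-payoff-parity type condition, which Adam may be unable to achieve even when he can defeat each objective in isolation. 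Hence Eve can win the disjunction on $U$ without winning pure parity, the reduction to parity breaks, and this interaction is exactly what makes the statement nontrivial. It is the step I expect to be the main obstacle.

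To disentangle the two objectives I would argue by induction on the number $d$ of priorities, in the style of Zielonka's algorithm for parity games, with the positional determinacy of mean payoff games serving as the base case in the regions where parity cannot help Eve. For the inductive step I would peel off the top priority: when it is even, compute Eve's attractor to the top-priority edges and recurse on the residual sub-arena, whose objective is $\Parity_{d-1} \vee \MP{N}$ and so falls under the induction hypothesis; when it is odd, symmetrically use Adam's attractor and recurse, iterating the attractor construction to a fixpoint. Each recursive call returns a positional strategy on a sub-arena, attractor (reachability) strategies are positional, and the mean payoff disjunct is resolved positionally by the cited black box; prefix independence is then what makes the gluing sound, since a play consistent with the assembled positional strategy either is eventually trapped in one recursively solved region, where it satisfies the corresponding disjunct, or crosses region boundaries finitely often, and finite prefixes do not affect whether $\col(\pi) \in \Parity_d \vee \MP{N}$. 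The delicate point, on which I would spend the most care, is the bookkeeping of the attractor decomposition so that the parity and mean payoff winning conditions are correctly attributed across regions — precisely so as never to require Adam to defeat both objectives at once, which as noted above he cannot be assumed to do.
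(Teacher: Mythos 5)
First, note that the paper does not prove this statement at all: it is imported wholesale from~\cite{ChatterjeeHJ05}, so there is no in-paper argument to compare yours against; your proposal has to stand on its own. The prefix-independence half is fine and routine, as you say. Your overall skeleton for positional determinacy --- a Zielonka-style induction on the number of priorities, with positional determinacy of mean payoff games as the base case and prefix independence justifying the gluing --- is also genuinely close in spirit to how this result is established in the literature, and your even-priority case goes through: if the top priority $p$ is even, a play consistent with the assembled positional strategy either enters Eve's attractor to the priority-$p$ edges infinitely often (so parity holds) or eventually stays in the residual trap, where the inductive hypothesis and prefix independence apply.

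The genuine gap is in the odd case, and it is exactly the interaction you yourself flag in your second paragraph and then never resolve. Saying ``symmetrically use Adam's attractor and recurse, iterating to a fixpoint'' hides the asymmetry of the situation: to conclude that the region on which your recursion yields a positional strategy for Eve is \emph{all} of her winning region, you must show that Adam wins on the final residual region. In ordinary parity games this is done by letting Adam attract to the top odd priority infinitely often; here that only defeats the parity disjunct, whereas Adam must defeat the \emph{conjunction} ``$\lnot\Parity_d$ and $\lnot\MP{N}$'' --- precisely the ``genuinely harder, mean-payoff-parity type condition'' you warned about. Closing this requires a strengthened induction in which the residual region is intersected with the region where Adam wins the pure mean payoff game, and an interleaving argument in which Adam alternates increasingly long phases of his mean-payoff-spoiling strategy with excursions to the odd priority; this strategy for Adam necessarily uses infinite memory (the conjunction side of mean-payoff parity games is not finite-memory determined), and the limit argument showing the interleaved play still has $\liminf$ of averages below $0$ is the real technical content of the theorem. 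As written, your plan would ``verify'' positional determinacy of any disjunction of parity with a prefix-independent positional objective by the same words, so the step where the specific structure of mean payoff enters is missing.
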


\subsection{Separating automata for disjunctions of parity and mean payoff objectives} 
\label{subsec:universalgraphsdisjunctionsparitymeanpayoff}

\begin{theorem}
\label{thm:separatingautomatadisjparitymp}
Let $n,d,N \in \N$.
Let $\auto_P$ an $(n,\Parity_d)$-separating automaton, $\auto_{MP}$ an $(n,\MP{N})$-separating automaton.
Then there exists an $(n, \Parity_d \vee \MP{N})$-separating automaton of size $O(d \cdot |\auto_P| \cdot |\auto_{MP}|)$.
\end{theorem}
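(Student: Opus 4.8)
The plan is to build the $(n, \Parity_d \vee \MP{N})$-separating automaton $\auto$ as a product of the two black boxes carrying an additional counter $\ell \in [0,d]$: a state is a triple $(q_P, q_{MP}, \ell)$ where $q_P$ is a state of $\auto_P$, $q_{MP}$ is a state of $\auto_{MP}$, and $\ell$ records the largest priority seen since the last time the mean payoff component was reset. This makes the size bound $O(d \cdot |\auto_P| \cdot |\auto_{MP}|)$ immediate, so all the content lies in the transition function and in proving the two inclusions $(\Parity_d \vee \MP{N})^{\mid n} \subseteq L(\auto) \subseteq \Parity_d \vee \MP{N}$.

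The engine of the whole argument is a structural property of graphs satisfying the disjunction, which I would prove first: in any graph satisfying $\Parity_d \vee \MP{N}$, every cycle of negative total weight has even maximal priority (equivalently, a cycle whose maximal priority is odd has non-negative total weight). Indeed, repeating such a cycle forever produces an infinite path whose parity projection has odd maximal priority seen infinitely often, hence fails $\Parity_d$; since the path must satisfy the disjunction it satisfies $\MP{N}$, forcing the average, and thus the total, weight of the cycle to be non-negative. This is exactly what will allow the mean payoff component to survive whenever it is \emph{protected} by even priorities.

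For soundness ($L(\auto) \subseteq \Parity_d \vee \MP{N}$) I would first analyse the bare mean payoff component: it reads the weights $w_{MP}$ and is reset to its initial state exactly when the current priority is even and at least $\ell$ (a \emph{dominant even priority}), in which case $\ell$ is reset to that priority, while otherwise $\ell$ is updated to the running maximum and $q_{MP}$ advances; the automaton moves to $\bot$ only when this component gets stuck. On any infinite run, let $h$ be the largest priority occurring infinitely often. If $h$ is even the parity projection already wins. If $h$ is odd, the reset priorities form a non-decreasing sequence of even values, hence eventually stabilise to some $e < h$ occurring infinitely often; but between two such resets $h$ cannot occur (seeing $h$ raises $\ell$ above $e$ and blocks the next reset), contradicting that $h$ occurs infinitely often. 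Hence resets are finite, the mean payoff component runs forever without reset on a suffix, that suffix lies in $L(\auto_{MP}) \subseteq \MP{N}$, and by prefix independence $w_{MP} \in \MP{N}$. Note this argument is global and never uses the size-$n$ hypothesis, as soundness requires.

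The main obstacle is completeness, and this is where $\auto_P$ must genuinely be used: the bare product is sound but rejects paths that win by parity while losing by mean payoff (for instance pumping a negative cycle of even maximal priority), on which $q_{MP}$ gets stuck. The plan is to run $\auto_P$ on a priority sequence routed from the input — the true priorities, boosted while the mean payoff component is healthy, using the counter $\ell$ — chosen so that (i) whenever this routed run of $\auto_P$ stays alive forever the word is in $\Parity_d$, which preserves soundness since $L(\auto_P) \subseteq \Parity_d$, and (ii) for any path of a size-$n$ graph satisfying the disjunction with $w_{MP} \notin \MP{N}$, the routed sequence is the labelling of a path in a size-$n$ graph satisfying $\Parity_d$, so it lies in $\Parity_d^{\mid n} \subseteq L(\auto_P)$ and $\auto_P$ does not get stuck. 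Establishing (ii) is the delicate step: it amounts to converting the structural property together with positional determinacy into an explicit small parity graph witnessing membership in $\Parity_d^{\mid n}$, all while keeping $\auto_P$ and $\auto_{MP}$ as black boxes and not exceeding the $[0,d]$ counter. The complementary case $w_{MP} \in \MP{N}$ is by contrast routine: every suffix cut out by a reset is again the labelling of a path of a size-$n$ graph satisfying $\MP{N}$ (prefix independence), hence lies in $\MP{N}^{\mid n} \subseteq L(\auto_{MP})$, so the mean payoff component never gets stuck no matter how the resets fall.
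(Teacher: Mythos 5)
Your overall architecture (a product with state space $[0,d] \times Q_P \times Q_{MP}$, the counter storing the maximal priority since the last reset, $\auto_{MP}$ running as the primary component and $\auto_P$ consuming aggregated priorities) matches the paper's, and your structural observation that in a graph satisfying the disjunction every negative cycle has even maximal priority is correct and is indeed the engine of the paper's argument. But there is a genuine gap exactly where you flag ``the delicate step'': you never construct the small parity graph witnessing that the routed priority sequence lies in $\Parity_d^{\mid n}$, and that construction is the entire content of the completeness direction. The paper's route is: set the reset policy to ``reset when $\delta_{MP}$ is undefined, and at that moment feed the stored maximal priority to $\auto_P$'' (rather than your ``reset at dominant even priorities''), so that the blocks between resets are exactly the minimal prefixes rejected by $\auto_{MP}$; then define a graph $G_P$ on $V(G)$ with an edge $(v,p,v')$ whenever some path of $G$ from $v$ to $v'$ with maximal priority $p$ is rejected by $\auto_{MP}$, and prove that $G_P$ satisfies $\Parity_d$. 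That proof in turn needs an auxiliary graph $G_{MP}$ (keeping only the weighted edges whose priority is odd, or even but not lying on a negative cycle of that maximal priority), the claim that $G_{MP}$ satisfies $\MP{N}$, and a pumping argument combining an odd cycle of $G_P$ with a negative even-priority cycle of $G$. Your structural lemma alone does not suffice here: an odd cycle of $G_P$ corresponds to a cycle of $G$ rejected by $\auto_{MP}$, which need not be negative, so the contradiction must be routed through $G_{MP}$ and the separation property $\MP{N}^{\mid n} \subseteq L(\auto_{MP})$.

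A secondary issue is that your final automaton is never pinned down: the word actually fed to $\auto_P$ (``the true priorities, boosted while the mean payoff component is healthy'') and the acceptance condition of the product are left unspecified, so neither inclusion is established for the object whose size you bound. With the paper's reset policy both inclusions fall out cleanly: soundness splits on finitely versus infinitely many resets (your analysis of the finitely-many-resets case is essentially the paper's), and completeness reduces to observing that the sequence of reset priorities labels a path of $G_P$, a graph of size at most $n$ satisfying $\Parity_d$.
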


\begin{proof}
Let us write $\auto_P = (Q_P,q_{0,P},\delta_P)$ and $\auto_{MP} = (Q_{MP},q_{0,MP},\delta_{MP})$.
We define a deterministic automaton $\auto_{P \vee MP}$: 
the set of states is $[0,d] \times Q_P \times Q_{MP}$,
the initial state is $(d, q_{0,P}, q_{0,MP})$,
and the transition function is
\[
\delta((p,q_P,q_{MP}), (p',w)) = 
\begin{cases}
(\max(p,p'), q_P, \delta_{MP}(q_{MP}, w)) & \text{ if $\delta_{MP}(q_{MP}, w)$ is defined,} \\
(0, \delta_P(q_{P}, \max(p,p')), q_{0,MP}) & \text{ if $\delta_{MP}(q,_{MP}, w)$ is not defined}.
\end{cases}
\]
Intuitively: $\auto_{P \vee MP}$ simulates the automaton $\auto_{MP}$, storing the maximal priority seen since the last reset (or from the beginning).
If the automaton $\auto_{MP}$ rejects, the automaton resets, which means two things:
it simulates one transition of the automaton $\auto_{P}$ using the stored priority, 
and resets the state of $\auto_{MP}$ to its initial state.
The automaton $\auto_{P \vee MP}$ rejects only if $\auto_{P}$ rejects during a reset.

We now prove that $\auto_{P \vee MP}$ is an $(n,\Parity_d \vee \MP{N})$-separating automaton.
\begin{itemize}
	\item $L(\auto_{P \vee MP}) \subseteq \Parity_d \vee \MP{N}$.
	Let $\pi$ be an infinite path accepted by $\auto_{P \vee MP}$, we distinguish two cases by looking at the run of $\pi$.
	We extend the previous notation for projections: 
	for a path $\pi$, we write $\pi_P$ for its projection on the first component
	and $\pi_{MP}$ for its projection on the second component.

	\begin{itemize}
		\item If the run is reset finitely many times, let us write $\pi = \pi' \pi''$
		where $\pi'$ is finite and $\pi'_{MP}$ rejected by $\auto_{MP}$, and $\pi''$ is infinite and $\pi''_{MP}$ is accepted by $\auto_{MP}$.
		Since $\auto_{MP}$ satisfies $\MP{N}$, this implies that $\pi''_{MP}$ satisfies $\MP{N}$,
		so by prefix independence, $\pi_{MP}$ satisfies $\MP{N}$ hence $\pi$ satisfies $\Parity_d \vee \MP{N}$.
	
		\item If the run is reset infinitely many times, we may find an infinite decomposition $\pi = \pi^1 \pi^2 \dots$
		where for each $i \in \N$, the path $\pi^i_{MP}$ is rejected by $\auto_{MP}$, and its proper prefixes are not.
		Let $p_i$ be the maximum priority appearing in $\pi^i_{P}$, the run of $\pi$ over $\auto_{P \vee MP}$
		induces a run of $p_1 p_2 \dots$ over $\auto_P$.
		Since $\auto_P$ satisfies $\Parity_d$, this implies that $p_1 p_2 \dots$ satisfies $\Parity_d$.
		By definition of the $p_i$'s, this implies that $\pi_P$ satisfies $\Parity_d$ so a fortiori $\pi$ satisfies $\Parity_d \vee \MP{N}$.
	\end{itemize}		
	
	\item $(\Parity_d \vee \MP{N})^{\mid n} \subseteq L(\auto_{P \vee MP})$.
	Let $G$ be a graph satisfying $\Parity_d \vee \MP{N}$ of size at most $n$,
	we need to show that all infinite paths of $G$ are accepted by $\auto_{P \vee MP}$.
	
	We construct a graph $G_{MP}$ over the set of colours $[-N,N]$ and a graph $G_P$ over the set of colours $[0,d]$.
	Both use the same set of vertices as $G$. 
	We prove that the graph $G_{MP}$ satisfies $\MP{N}$, which is used to prove that $G_P$ satisfies $\Parity_d$.
	\begin{itemize}
		\item \textbf{The graph $G_{MP}$.} 
	There is an edge $(v,w,v') \in E(G_{MP})$ if 
	there exists $p \in [0,d]$ such that $e = (v,(p,w),v') \in E(G)$,
	and either $p$ is odd or $p$ is even and $e$ is not contained in any negative cycle with maximum priority $p$.
	
	\vskip1em	
	We claim that $G_{MP}$ satisfies $\MP{N}$.
	Assume towards contradiction that $G_{MP}$ contains a negative cycle $C_{MP}$. 
	It induces a negative cycle $C$ in $G$, by definition of $G_{MP}$ necessarily the maximum priority in $C$ is odd.
	Hence $C$ is a negative odd cycle in $G$, a contradiction.
	
		\item \textbf{The graph $G_{P}$.}
	There is an edge $(v,p,v') \in E(G_P)$ if 
	there exists a path in $G$ from $v$ to $v'$ 
	with maximum priority $p$ and (whose projection on the mean payoff component is) rejected by $\auto_{MP}$.

	\vskip1em	
	We claim that $G_{P}$ satisfies $\Parity_d$.
	Assume towards contradiction that $G_P$ contains an odd cycle $C_P$.
	For each edge in this cycle there is a corresponding path rejected by $\auto_{MP}$
	with the same maximum priority.
	Putting these paths together yields an odd cycle $C$, of maximal priority $p$, in $G$ whose projection on the mean payoff component 
	is rejected by $\auto_{MP}$.
	Since $\MP{N}^{\mid n} \subseteq L(\auto_{MP})$ and as we have shown, $G_{MP}$ satisfies $\MP{N}$, 
	the projection of $C$ on the mean payoff component is not in $G_{MP}$,
	so there exists an edge $(v,(p',w),v')$ in $C$ such that $(v,w,v')$ is not in $E(G_{MP})$. 
	This implies that $p'$ is even, so in particular $p' < p$, and $(v,(p',w),v')$ is contained in a negative cycle $C'$ in $G$ with maximum priority $p'$.	
	Combining the odd cycle $C$ followed by sufficiently many iterations of the negative cycle $C'$ yields a path in $G$, with negative weight, and maximal priority $p$ which is odd, a contradiction.
	\end{itemize}
			
	Let $\pi$ an infinite path in $G$, we show that $\pi$ is accepted by $\auto_{P \vee MP}$.
	Let us consider first only the mean payoff component: we run $\pi$ repeatedly 	over $\auto_{MP}$,
	and distinguish two cases.
	
	\begin{itemize}
		\item If there are finitely many resets, let us write $\pi = \pi_1 \pi_2 \dots \pi_k \pi'$
		where $\pi_1,\dots,\pi_k$ are paths rejected by $\auto_{MP}$ with proper prefixes accepted by $\auto_{MP}$ and $\pi'$ is accepted by $\auto_{MP}$.
		To show that $\pi$ is accepted by $\auto_{P \vee MP}$ we need to show that the automaton $\auto_P$
		accepts the word $p_1 \dots p_k$ where $p_i$ is the maximum priority appearing in $\pi_i$ for $i \in [1,k]$.
		Indeed, $p_1 \dots p_k$ is a path in $G_P$, which is a graph of $n$ satisfying $\Parity_d$,
		so $\auto_P$ accepts $p_1 \dots p_k$.
	
		\item If there are infinitely many resets, let us write $\pi = \pi_1 \pi_2 \dots$
		where for each $i \in \N$, the path $\pi_i$ is rejected by $\auto_{MP}$ and its proper prefixes are not.
		To show that $\pi$ is accepted by $\auto_{P \vee MP}$ we need to show that the automaton $\auto_P$
		accepts the word $p_1 p_2 \dots$, which holds for the same reason as the other case: 
		$p_1 p_2 \dots$ is a path in $G_P$.
	\end{itemize}		
\end{itemize}

\end{proof}

\subsection{The complexity of solving disjunctions of parity and mean payoff games using separating automata}
\label{subsec:complexitydisjunctionsparitymeanpayoff}

The class of games with a disjunction of a parity and a mean payoff objective have been introduced in~\cite{ChatterjeeHJ05},
with a twist: this paper studies the case of a conjunction instead of a disjunction, which is more natural in many applications.
This is equivalent here since both parity and mean payoff objectives are dual, in other words
if the objective of Eve is a disjunction of a parity and a mean payoff objective, then 
the objective of Adam is a conjunction of a parity and a mean payoff objective.
Hence all existing results apply here with suitable changes.
The reason why we consider the objective of the opponent is that it is positionally determined,
which is the key assumption for using the separating automaton technology.

The state of the art for solving disjunctions of parity and mean payoff games is due to~\cite{DaviaudJL18},
which presents a pseudo-quasi-polynomial algorithm.
We refer to~\cite{DaviaudJL18} for references on previous studies for this class of games.
As they explain, these games are logarithmic space equivalent to the same games replacing mean payoff by energy,
and polynomial time equivalent to games with weights~\cite{ScheweWZ19}, extending games with costs~\cite{FM14}.

Combining Theorem~\ref{thm:separatingautomatadisjparitymp}, Theorem~\ref{thm:separatingautomataparity}, Theorem~\ref{thm:separatingautomatamp}, and Theorem~\ref{thm:safetygames} yields the following result.

\begin{theorem}
Let $n,d,N \in \N$.
There exists an algorithm in the RAM model with word size $w = \log(n) + \log(N)$
for solving disjunctions of parity and mean payoff games
with $n$ vertices, $m$ edges, weights in $[-N,N]$ and priorities in $[0,d]$
of time complexity 
\[
O\left(m d \cdot \underbrace{n \cdot \binom{\lceil \log(n) \rceil + d/2 - 1}{\lceil \log(n) \rceil}}_{\text{Parity}}
\cdot \underbrace{nN}_{\text{Mean Payoff}}
\right).
\]
and space complexity $O(n)$.
\end{theorem}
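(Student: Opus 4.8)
The plan is to obtain this result purely by assembling the earlier constructions, with no new combinatorial argument: all the mathematical content already resides in Theorem~\ref{thm:separatingautomatadisjparitymp}, so what remains is to instantiate it and then carefully account for sizes and for the cost model.

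First I would instantiate Theorem~\ref{thm:separatingautomatadisjparitymp} with the two black-box automata supplied by Theorem~\ref{thm:separatingautomataparity} and Theorem~\ref{thm:separatingautomatamp}. Taking $\auto_P$ an $(n,\Parity_d)$-separating automaton of size $O(n \cdot \binom{\lceil \log n \rceil + d/2 - 1}{\lceil \log n \rceil})$ and $\auto_{MP}$ an $(n,\MP{N})$-separating automaton of size $O(nN)$, Theorem~\ref{thm:separatingautomatadisjparitymp} produces an $(n,\Parity_d \vee \MP{N})$-separating automaton $\auto$ of size
\[
|\auto| = O\!\left(d \cdot |\auto_P| \cdot |\auto_{MP}|\right) = O\!\left(d \cdot n \binom{\lceil \log n \rceil + d/2 - 1}{\lceil \log n \rceil} \cdot nN\right).
\]

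Next I would reduce to a safety game. Since $\Parity_d \vee \MP{N}$ is positionally determined (by the theorem of~\cite{ChatterjeeHJ05} recalled above), Theorem~\ref{thm:separatingautomata} applies: for a game $\game$ of size $n$ and an initial vertex $v_0$, Eve wins from $v_0$ in $\game$ if and only if she wins from $(v_0,q_0)$ in the safety game $\game \chain \auto$. It therefore suffices to build and solve $\game \chain \auto$. Reading off the definition of the chained game, $\game \chain \auto$ has $O(n \cdot |\auto|)$ vertices and $O(m \cdot |\auto|)$ edges. Applying Theorem~\ref{thm:safetygames} solves it in time linear in its number of edges, that is $O(m \cdot |\auto|)$; substituting the bound on $|\auto|$ gives exactly the announced time complexity, and the space bound is inherited in the same way, being linear in the number of vertices of $\game \chain \auto$.

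The only genuinely delicate point, and the step on which I would spend most care, is the computational model: to reach time $O(m \cdot |\auto|)$ rather than merely a polynomial overhead, each transition of $\auto$ must be evaluated in constant time while the solver of Theorem~\ref{thm:safetygames} traverses the (implicitly represented) product. This requires that the black-box transition functions $\delta_P$ and $\delta_{MP}$, together with the priority maximum $\max(p,p')$ appearing in the definition of $\auto$ and the mean-payoff update, each cost $O(1)$. This in turn is what fixes the word size: the coordinate operations occurring in a transition are the manipulation of a vertex index and automaton states (handled within $\Theta(\log n)$-sized words) and the arithmetic on weights in $[-N,N]$ inside $\delta_{MP}$ (requiring $\Theta(\log N)$), so $w = \log(n) + \log(N)$ is precisely what makes every per-edge operation constant time. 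I expect no essential obstacle beyond this bookkeeping, since positional determinacy, the separating property, and the correctness of the reduction are all furnished by the quoted results.
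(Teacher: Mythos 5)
Your proposal follows exactly the route the paper intends: the paper gives no more of a proof than the sentence ``Combining Theorem~\ref{thm:separatingautomatadisjparitymp}, Theorem~\ref{thm:separatingautomataparity}, Theorem~\ref{thm:separatingautomatamp}, and Theorem~\ref{thm:safetygames} yields the following result,'' and your instantiation of the separating automaton, the reduction via Theorem~\ref{thm:separatingautomata}, and the time accounting $O(m\cdot|\auto|)$ are precisely that combination, with a sensible discussion of the word size that the paper leaves implicit. The one point where your write-up does not deliver the stated bound is the space complexity: you say the space is ``linear in the number of vertices of $\game \chain \auto$,'' but that is $O(n\cdot|\auto|)$, not the claimed $O(n)$. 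To justify $O(n)$ one must not build the product game explicitly; rather, the safety-game computation on $\game \chain \auto$ has to be run as a value/progress-measure iteration that stores, for each of the $n$ original vertices, a single current automaton state (fitting in $O(1)$ words of size $w$), with transitions of $\auto$ evaluated on the fly. This is the same implicit-product observation you already invoke for the time bound, but it needs to be stated as the reason for the space bound as well; as written, your space argument contradicts the statement you are proving. (The paper itself glosses over this, so this is a refinement rather than a divergence from its approach.)
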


Our algorithm is similar to the one constructed in~\cite{DaviaudJL18}: they are both value iteration algorithms (called progress measure lifting algorithm in~\cite{DaviaudJL18}), combining the two value iteration algorithms for parity and mean payoff games.
However, the set of values are not the same (our algorithm stores an additional priority) and the proofs are very different. 
Besides being much shorter, one advantage of our proof is that it works with abstract separating automata for both parity and mean payoff objectives, and shows how to combine them, whereas in~\cite{DaviaudJL18} the proof is done from scratch, extending both proofs for the parity and mean payoff objectives.

\section{Disjunction of mean payoff}
\label{sec:disjmeanpayoff}
We define the objective $\bigvee_{i \in [1,d]} \MP{N}^i$ referred to as `disjunction of mean payoff'.
The set of colours is $[-N,N]^d \subseteq \Z^d$.
For $w \in ([-N,N]^d)^\omega$ and $i \in [1,d]$ we write $w_i \in [-N,N]^\omega$ for the projection on the $i$-th component.
\[
\bigvee_{i \in [1,d]} \MP{N}^i = \set{w \in ([-N,N]^d)^\omega : \exists i \in [1,d], w_i \in \MP{N}}.
\]
We refer to Subsection~\ref{subsec:complexitydisjunctionsmeanpayoff} for a discussion on existing results.

\begin{theorem}[Follows from~\cite{Kopczynski06} as observed in~\cite{VelnerC0HRR15}]
Disjunctions of mean payoff objectives are prefix independent and positionally determined.
\end{theorem}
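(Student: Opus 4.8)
The plan is to establish the two assertions separately: prefix-independence by a direct coordinatewise computation, and positional determinacy by reducing to the sufficient condition for half-positional determinacy of~\cite{Kopczynski06}.

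For prefix-independence I would first recall that each single mean payoff objective $\MP{N}$ is prefix independent: prepending a finite word of length $\ell$ to some $y \in [-N,N]^\omega$ changes each partial average $\frac1n\sum_{k=1}^n y_k$ by $O(\ell/n)$, which vanishes as $n \to \infty$ and therefore leaves $\liminf_n \frac1n \sum_{k=1}^n y_k$ unchanged. I would then lift this to the disjunction: writing $u \in C^*$ and $v \in C^\omega$ with $C = [-N,N]^d$, and using $u_i, v_i$ for the coordinate projections, we have $uv \in \bigvee_{i\in[1,d]} \MP{N}^i$ iff $u_i v_i \in \MP{N}$ for some $i$, iff $v_i \in \MP{N}$ for some $i$ (by prefix-independence of $\MP{N}$ in each coordinate), iff $v \in \bigvee_{i\in[1,d]} \MP{N}^i$. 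This chain of equivalences is routine.

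For positional determinacy the approach is to present $\bigvee_{i\in[1,d]} \MP{N}^i$ as an instance of Kopczyński's criterion, which guarantees positional winning strategies for Eve in every game she wins, for prefix-independent conditions satisfying his concavity property~\cite{Kopczynski06}. Concretely I would (i) recall that a single mean payoff objective is concave in the sense of~\cite{Kopczynski06}, and (ii) prove that this property is preserved by the finite disjunction over the $d$ coordinates, so that $\bigvee_{i\in[1,d]} \MP{N}^i$ is again concave and the criterion applies. Step (ii) is exactly the observation attributed to~\cite{VelnerC0HRR15} and carries the whole argument.

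I expect the combination step (ii) to be the main obstacle, and it is subtler than it looks. The tempting shortcut is to prove the stronger \emph{submixing} property --- whenever an infinite word $w$ is an interleaving of two words and $w$ satisfies the objective, then one of the two already does --- because submixing is a classical sufficient condition for half-positionality and interacts well with the coordinatewise disjunction. This route is unavailable: already a single mean payoff objective fails to be submixing. Indeed, let $u$ alternate blocks $(-1)^{4^k}(+1)^{4^k}$ for $k = 1, 2, \dots$, let $v$ be the letterwise negation of $u$, and let $w$ be their perfectly alternating interleaving (one letter of $u$, then one of $v$, and so on). Then the partial sums of $w$ stay bounded, so $w$ satisfies $\MP{N}$, whereas the averages of both $u$ and $v$ dip below a fixed negative constant infinitely often, so neither satisfies $\MP{N}$. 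Consequently the proof must go through Kopczyński's weaker concavity condition, and the real work is to show that concavity --- unlike submixing --- survives the disjunction.
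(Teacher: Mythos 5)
First, a framing remark: the paper does not prove this statement at all --- it is imported wholesale from~\cite{Kopczynski06} and~\cite{VelnerC0HRR15} --- so there is no in-paper argument to compare yours against. Your prefix-independence argument is correct and routine. The positional-determinacy half, however, has a genuine gap, in two places.

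First, your counterexample to submixing is broken. If $u$ consists of the blocks $(-1)^{4^k}(+1)^{4^k}$ for $k=1,2,\dots$, then its letterwise negation $v$ consists of the blocks $(+1)^{4^k}(-1)^{4^k}$, and every partial sum of $v$ is nonnegative: the running sum climbs to $4^k$ and returns to $0$ at each block boundary, never dipping below $0$. Hence $\liminf_n \frac{1}{n}\sum_{j\le n} v_j = 0$, so $v$ \emph{does} satisfy $\MP{N}$ and the triple $(u,v,w)$ witnesses nothing. A genuine witness needs blocks of strictly increasing lengths with alternating signs, e.g.\ $u=(+1)^{L_1}(-1)^{L_2}(+1)^{L_3}\cdots$ with $L_{k+1}$ much larger than $L_1+\cdots+L_k$: then $u$ and $v=-u$ both have $\liminf$-average $-1$, while their perfect interleaving has all partial sums in $\{-1,0,1\}$ and so satisfies $\MP{N}$.

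Second, and more seriously, the repaired example demolishes your own plan. For a qualitative winning condition $W$, Kopczy\'nski's concavity is \emph{exactly} the shuffle property you call submixing: $W$ is concave iff its complement is closed under shuffles, i.e.\ iff every shuffle of two words outside $W$ stays outside $W$. There is no strictly weaker ``concavity'' to retreat to once submixing fails. The corrected example exhibits a shuffle lying in $\MP{N}$ of two words not in $\MP{N}$, so $\MP{N}$ is \emph{not} concave; step (i) of your plan (``a single mean payoff objective is concave'') is therefore false, and steps (ii)--(iii) have nothing to stand on --- note that step (ii) would anyway be trivial, since a union of concave conditions is automatically concave. The convexity fact that actually circulates in this area is the opposite one: $\MP{N}$ (with $\liminf$) is \emph{convex}, i.e.\ closed under shuffles, which makes $\set{w : \liminf_n \frac1n\sum_{j \le n}w_j < 0}$ concave and yields memoryless strategies for the player holding a disjunction of \emph{those} conditions. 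Deriving positional determinacy for Eve's objective $\bigvee_i \MP{N}^i$ from~\cite{Kopczynski06} requires his finer results on unions of convex geometrical conditions (or a direct finite-arena edge-deletion argument), not the ``each disjunct is concave, hence done'' route you propose.
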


\subsection{A general reduction of separating automata for strongly connected graphs} 
\label{subsec:reductionstronglyconnected}
The first idea is very general, it roughly says that if we know how to construct separating automata for strongly connected graphs,
then we can use them to construct separating automata for general graphs.

We say that a graph is strongly connected if for every pair of vertices there exists a path from one vertex to the other.
Let us refine the notion of separating automata.
We write $\Omega^{\mid n}_{\text{sc}}$ for the set of infinite sequences of colours that label paths from strongly connected graphs of size at most $n$ satisfying $\Omega$.
Formally,
\[
\Omega^{\mid n}_{\text{sc}} = \{\col(\pi) \mid \pi \in \IPath(G), G \text{ is strongly connected, has size at most $n$, and satisfies $\Omega$}\}.
\]

\begin{definition}[Separating automata for strongly connected graphs]
An automaton $\auto$ is $(n,\Omega)$-separating for strongly connected graphs if 
$\Omega^{\mid n}_{\text{sc}} \subseteq L(\auto) \subseteq \Omega$.
\end{definition}

Let us give a first construction, which we refine later.
Let $\auto_1 = (Q_1,q_{0,1},\delta_1)$ and $\auto_2 = (Q_2,q_{0,2},\delta_2)$ two safety automata, we define their sequential product $\langle \auto_1,\auto_2 \rangle$ as follows:
the set of states is $Q_1 \cup Q_2$, the initial state is $q_{0,1}$, and the transition function is
\[
\delta(s,c) = 
\begin{cases}
\delta_1(s,c) & \text{ if $s \in Q_1$ and $\delta_1(s,c)$ is defined}, \\
q_{0,2}       & \text{ if $s \in Q_1$ and $\delta_1(s,c)$ is not defined}, \\
\delta_2(s,c) & \text{ if $s \in Q_2$ and $\delta_2(s,c)$ is defined}.
\end{cases}
\]
The sequential product is extended inductively:
$\langle \auto_1,\dots,\auto_p \rangle = \langle \langle \auto_1,\dots,\auto_{p-1} \rangle, \auto_p \rangle$.

\begin{lemma}
\label{lem:stronglyconnectedtogeneralnaive}
Let $\Omega$ be a prefix independent objective, $n \in \N$, 
and $\auto$ an $(n,\Omega)$-separating automaton for strongly connected graphs.
Then $\auto^n = \langle \underbrace{\auto,\dots,\auto}_{n \text{ times}} \rangle$ is an $(n,\Omega)$-separating automaton.
\end{lemma}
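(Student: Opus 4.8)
# Proof Proposal

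The plan is to prove the two required inclusions $\Omega^{\mid n} \subseteq L(\auto^n)$ and $L(\auto^n) \subseteq \Omega$ separately, exploiting the structure of the sequential product $\langle \auto,\dots,\auto\rangle$. The key intuition is that the $n$ copies of $\auto$ are laid out in sequence, with transition from copy $j$ to copy $j+1$ happening precisely when the current copy would reject (i.e. $\delta_j$ is undefined); running out of all $n$ copies is the only way $\auto^n$ itself rejects. So a word is accepted by $\auto^n$ if and only if it can be split into at most $n$ blocks, each accepted by $\auto$ read from its initial state.

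For the inclusion $L(\auto^n) \subseteq \Omega$, I would take an infinite path $\pi$ accepted by $\auto^n$ and observe that its run passes through at most $n$ copies before settling permanently in some copy, say the last copy $\auto_k$ it ever enters (the run cannot cycle back to an earlier copy by construction, since transitions only move forward). From some finite point onward the run stays in $\auto_k$ and is never rejected there, so the corresponding suffix $v$ of $\col(\pi)$ is accepted by a single copy of $\auto$. Since $\auto$ is $(n,\Omega)$-separating for strongly connected graphs, $L(\auto) \subseteq \Omega$, hence $v \in \Omega$; by prefix independence of $\Omega$, the full word $\col(\pi) \in \Omega$. This direction is straightforward.

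The main obstacle is the inclusion $\Omega^{\mid n} \subseteq L(\auto^n)$, which requires an argument about decomposing paths in a size-$n$ graph satisfying $\Omega$ into few strongly connected pieces. First I would fix a graph $G$ of size at most $n$ satisfying $\Omega$ and an infinite path $\pi$ in it, and run $\pi$ through $\auto^n$. The goal is to show this run never exhausts all $n$ copies, equivalently that $\pi$ is split into at most $n$ maximal blocks each rejected by $\auto$ (with proper prefixes accepted). The crucial combinatorial fact I expect to need is that each such rejected block must visit a vertex that a \emph{later} block cannot revisit: since $\auto$ accepts every infinite path inside any strongly connected subgraph of size $\le n$ satisfying $\Omega$, a block that gets rejected by $\auto$ cannot be confined to a single strongly connected component from which it never exits. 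I would formalise this by arguing that between consecutive resets the path must \emph{leave} a strongly connected component of $G$ for good — intuitively, if the vertices visited during one block together with those of the next reset block all lay in one strongly connected subgraph, $\auto$ would have accepted (as the relevant infinite behaviour is captured by an infinite path in that strongly connected graph of size $\le n$, contradicting rejection). Since $G$ has at most $n$ vertices and hence the DAG of its strongly connected components has at most $n$ components, the path can cross from one component to a strictly later one at most $n-1$ times, bounding the number of blocks by $n$. Thus the run stays within the $n$ available copies and $\pi$ is accepted.

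The delicate point in this last argument is making precise the claim that a rejected block witnesses a genuine descent in the component DAG; the cleanest route is to track, along $\pi$, the strongly connected component of $G$ currently being visited and to show it is non-increasing in the DAG order and strictly decreases at each reset. Establishing that each reset forces a strict decrease — by using the separating property of $\auto$ on the strongly connected component where the path would otherwise stabilise — is the heart of the proof and where I would spend the most care.
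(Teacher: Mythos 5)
Your proposal is correct and follows essentially the same route as the paper: both directions rest on decomposing $G$ into topologically ordered strongly connected components, using $\IPath(G_i) \subseteq L(\auto)$ to show that a copy of $\auto$ cannot reject while the path remains inside a single component, and bounding the number of resets by the number of components (the paper phrases this as an induction on the number of components traversed, you as a strict-descent count in the component DAG --- the same argument). One cosmetic slip: in your opening paragraph the non-final blocks are minimal \emph{rejected} finite words rather than words ``accepted by $\auto$'', which is how your later argument in fact treats them.
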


\begin{proof}
We first show that $L(\auto^n) \subseteq \Omega$.
We note that any infinite run in $\auto^n$ eventually remains in one copy of $\auto$. Since $\auto$ satisfies $\Omega$ and by prefix independence of $\Omega$, this implies that the infinite path satisfies $\Omega$, thus so does $\auto^n$.

Let $G$ be a graph satisfying $\Omega$, we show that $\IPath(G) \subseteq L(\auto^n)$.
We decompose $G$ into strongly connected components: 
let $G_1,\dots,G_p$ be the maximal strongly connected components in $G$
indexed such that if there exists an edge from $G_i$ to $G_j$ then $i < j$.
Then $V(G)$ is the disjoint union of the $V(G_i)$'s.
Each $G_i$ is a subgraph of $G$, and since $G$ satisfies $\Omega$ then so does $G_i$.
It follows that for each $i$ we have $\IPath(G_i) \subseteq L(\auto)$.

Let us consider a path $\pi$ in $G$, we show that $\pi$ is accepted by $\auto^n$. 
The proof is by induction on the number of strongly connected components that $\pi$ traverses.
If it traverses only one such component, say $G_i$, since $\IPath(G_i) \subseteq L(\auto)$
then indeed $\auto$ accepts $\pi$, so a fortiori $\auto^n$ accepts $\pi$.
Otherwise, let $G_i$ the first strongly connected component traversed by $\pi$.
Since $\IPath(G_i) \subseteq L(\auto)$, the run on $\pi$ remains in the first copy of $\auto$ at least as long as $\pi$ remains in $G_i$.
If the run remains in the first copy of $\auto$ forever, then $\pi$ is accepted by $\auto^n$.
If not, the run jumps to the second component to read a suffix of $\pi$,
which traverses one less strongly connected component, so by induction hypothesis 
this suffix is accepted by $\auto^{n-1}$, hence $\pi$ is accepted by $\auto^n$.
\end{proof}

In the construction above we have used the fact that $G$ decomposes into at most $n$ strongly connected components of size $n$.
We refine this argument: the total size of the strongly connected components is $n$.
The issue we are facing in taking advantage of this observation is that the sequence of sizes is not known a priori, it depends on the graph. 
To address this we use a universal sequence. 
First, here a sequence means a finite sequence of non-negative integer, for instance $(5,2,3,3)$.
The size of a sequence is the total sum of its elements, so $(5,2,3,3)$ has size $13$.
We say that $v = (v_1,\dots,v_k)$ embeds into $u = (u_1,\dots,u_{k'})$ if
there exists an increasing function $f : [1,k] \to [1,k']$ such that for all $i \in [1,k]$,
we have $v_i \le u_{f(i)}$.
For example $(5,2,3,3)$ embeds into $(4,6,1,2,4,1,3)$ but not in $(3,2,5,3,3)$.
A sequence $u$ is $n$-universal if all sequences of size at most $n$ embed into $u$.

Let us define an $n$-universal sequence $u_n$, inductively on $n \in \N$.
We set $u_0 = ()$ (the empty sequence), $u_1 = (1)$, and 
$u_n$ is the concatenation of $u_{\lfloor n/2 \rfloor}$ with the singleton sequence $(n)$
followed by $u_{n - 1 - \lfloor n/2 \rfloor}$. 
Writing $+$ for concatenation, the definition reads $u_n = u_{\lfloor n/2 \rfloor} + (n) + u_{n - 1 - \lfloor n/2 \rfloor}$
Let us write the first sequences:
\[
u_2 = (1,2),\quad 
u_3 = (1,3,1),\quad
u_4 = (1,2,4,1),\quad
u_5 = (1,2,5,1,2),\quad
u_6 = (1,3,1,6,1,2), \dots
\]

\begin{lemma}
The sequence $u_n$ is $n$-universal and has size $O(n \log(n))$.
\end{lemma}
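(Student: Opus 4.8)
The plan is to prove the two assertions—universality and the size bound—separately, since they rest on different arguments; I expect the universality part to carry the real content while the size bound is routine.

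For the size, let $s(n)$ denote the size of $u_n$. From the recursive definition, together with the identity $n - 1 - \lfloor n/2 \rfloor = \lceil n/2 \rceil - 1$, I obtain the recurrence
\[
s(n) = s(\lfloor n/2 \rfloor) + n + s(\lceil n/2 \rceil - 1),
\]
with $s(0) = 0$ and $s(1) = 1$. I would analyse it via the recursion tree: each node carries an index $m$ and contributes $m$ to the total, and a node with index $m$ has children with indices $\lfloor m/2 \rfloor$ and $\lceil m/2 \rceil - 1$, whose sum $m - 1$ is strictly smaller than $m$. Summing contributions level by level, the total contribution at each depth is at most $n$ and does not increase with depth, while the largest index at depth $k$ is at most roughly $n / 2^k$, so the tree has depth $O(\log n)$. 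Multiplying these two estimates gives $s(n) = O(n \log n)$.

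For universality I would argue by strong induction on $n$ that every sequence of positive integers of size at most $n$ embeds into $u_n$. (I restrict to positive entries, as are the sizes of strongly connected components, so that the length of a sequence is bounded by its size; without this the claim fails for sequences padded with zeros.) The base cases $n = 0$ (only the empty sequence) and $n = 1$ (the empty sequence and $(1)$) are immediate. For the inductive step, write $u_n = u_{\lfloor n/2 \rfloor} + (n) + u_{\lceil n/2 \rceil - 1}$ and take a sequence $v = (v_1, \dots, v_k)$ with $\sum_i v_i \le n$. If the entire sum is at most $\lfloor n/2 \rfloor$, then $v$ embeds into $u_{\lfloor n/2 \rfloor}$ by induction, hence into $u_n$ since $u_{\lfloor n/2 \rfloor}$ is a prefix of $u_n$. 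Otherwise, let $j$ be the least index with $v_1 + \dots + v_j > \lfloor n/2 \rfloor$, which is well-defined because positivity makes the partial sums strictly increasing.

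I would then map the three pieces of $v$ onto the three blocks of $u_n$. By minimality of $j$ the prefix $(v_1, \dots, v_{j-1})$ has sum at most $\lfloor n/2 \rfloor$, so it embeds into $u_{\lfloor n/2 \rfloor}$ by the induction hypothesis; the element $v_j$ satisfies $v_j \le \sum_i v_i \le n$, so it is placed on the central symbol $(n)$; and the suffix $(v_{j+1}, \dots, v_k)$ has sum at most $n - (v_1 + \dots + v_j) \le n - (\lfloor n/2 \rfloor + 1) = \lceil n/2 \rceil - 1$, so it embeds into $u_{\lceil n/2 \rceil - 1}$ by induction. Concatenating these three increasing maps yields an increasing $f$ witnessing $v \hookrightarrow u_n$, closing the induction. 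The delicate point—and the step I expect to require the most care—is precisely this choice of split index: taking $j$ as the first position where the prefix sum exceeds $\lfloor n/2 \rfloor$ is exactly what simultaneously keeps the prefix within the left budget and forces the suffix within the right budget $\lceil n/2 \rceil - 1$.
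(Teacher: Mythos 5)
Your proof is correct and follows essentially the same route as the paper: the same split of $v$ at the first index where the prefix sum exceeds $\lfloor n/2 \rfloor$, the same induction, and the same recurrence $|u_n| = |u_{\lfloor n/2 \rfloor}| + n + |u_{n-1-\lfloor n/2\rfloor}|$ for the size (which the paper solves without detail and you solve via a recursion tree). Your two points of extra care — restricting to positive entries so that the split index exists and zero-padded sequences are excluded, and separately handling the case where the whole sum is at most $\lfloor n/2 \rfloor$ — are sensible tightenings of the paper's slightly terser argument, not a different approach.
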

\begin{proof}
We proceed by induction on $n$. The case $n = 0$ is clear, let us assume that $n > 0$.
Let $v = (v_1,\dots,v_k)$ be a sequence of size $n$, we show that $v$ embeds into $u_n$.
There exists a unique $p \in [1,k]$ such that 
$(v_1,\dots,v_{p-1})$ has size smaller than or equal to $\lfloor n/2 \rfloor$
and $(v_1,\dots,v_p)$ has size larger than $\lfloor n/2 \rfloor$.
This implies that $(v_{p+1},\dots,v_k)$ has size at most $n - 1 - \lfloor n/2 \rfloor$.
By induction hypothesis $(v_1,\dots,v_{p-1})$ embeds into $u_{\lfloor n/2 \rfloor}$
and $(v_{p+1},\dots,v_k)$ embeds into $u_{n - 1 - \lfloor n/2 \rfloor}$,
so $v$ embeds into $u_n$.

The recurrence on size is $|u_n| = |u_{\lfloor n/2 \rfloor}| + n + |u_{n - 1 - \lfloor n/2 \rfloor}|$.
Solving it shows that $|u_n|$ is bounded by $O(n \log(n))$.
\end{proof}

We now use the universal sequence to improve on Lemma~\ref{lem:stronglyconnectedtogeneralnaive}.

\begin{lemma}
\label{lem:stronglyconnectedtogeneral}
Let $\Omega$ be a positionally determined prefix independent objective and $n \in \N$.
For each $k \in [1,n]$, let $\auto_k$ be a $(k,\Omega)$-separating automaton for strongly connected graphs.
Let us write $u_n = (x_1,\dots,x_k)$, 
then $\auto(u_n) = \langle \auto_{x_1},\dots,\auto_{x_k} \rangle$ is an $(n,\Omega)$-separating automaton.
\end{lemma}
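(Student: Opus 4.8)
The plan is to establish the two inclusions $\Omega^{\mid n} \subseteq L(\auto(u_n)) \subseteq \Omega$ separately, following the structure of the proof of Lemma~\ref{lem:stronglyconnectedtogeneralnaive} but tracking component \emph{sizes} rather than their count. The upper inclusion $L(\auto(u_n)) \subseteq \Omega$ should be essentially identical to the naive case: any infinite run of the sequential product eventually stays in a single copy $\auto_{x_j}$, and since each $\auto_{x_j}$ is $(x_j,\Omega)$-separating for strongly connected graphs we have $L(\auto_{x_j}) \subseteq \Omega$; prefix independence of $\Omega$ then lifts this to the whole path. The interesting direction is $\Omega^{\mid n} \subseteq L(\auto(u_n))$.

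For the lower inclusion, I would take a graph $G$ of size at most $n$ satisfying $\Omega$, decompose it into maximal strongly connected components $G_1,\dots,G_p$ topologically ordered so that edges only go from lower to higher indices, and set $s_i = |V(G_i)|$. The crucial point is that $\sum_i s_i = |V(G)| \le n$, so the sequence $(s_1,\dots,s_p)$ has size at most $n$ and hence embeds into $u_n = (x_1,\dots,x_k)$ via some increasing $f$ with $s_i \le x_{f(i)}$. Since each $G_i$ is a subgraph of $G$ it also satisfies $\Omega$, and being strongly connected of size $s_i \le x_{f(i)}$ its infinite paths lie in $\Omega^{\mid x_{f(i)}}_{\text{sc}} \subseteq L(\auto_{x_{f(i)}})$. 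I would then argue, by induction on the number of components traversed (exactly as in the naive lemma), that the sequential product accepts any infinite path $\pi$ of $G$: while $\pi$ stays in its first visited component $G_i$, I route it into the copy $\auto_{x_{f(i)}}$, which accepts all of $G_i$'s infinite paths; if the run never leaves that copy then $\pi$ is accepted, and otherwise the run has moved forward to a later copy reading a suffix of $\pi$ that traverses strictly fewer components, to which the induction hypothesis applies.

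The main subtlety, and the step I expect to require the most care, is the alignment between the topological order of components and the left-to-right order of copies in the sequential product. The sequential product reads $\auto_{x_1},\auto_{x_2},\dots$ from left to right and can only move forward, so I must ensure that when $\pi$ leaves $G_i$ and enters some later component $G_{i'}$ (with $i' > i$ because of the topological ordering), the run can be advanced to a copy $\auto_{x_{f(i')}}$ with $f(i') > f(i)$; this is exactly guaranteed by $f$ being \emph{increasing}. I would make explicit that the run, upon a failed transition in copy $\auto_{x_{f(i)}}$, falls through to $q_{0,2}$ of the next copy and, since it may need to skip several copies before reaching index $f(i')$, it does so by reading the appropriate prefix of the suffix of $\pi$ — here prefix independence again ensures that skipping an arbitrary finite prefix does not affect acceptance. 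The only place positional determinacy of $\Omega$ enters (it is assumed in the hypothesis, unlike in Lemma~\ref{lem:stronglyconnectedtogeneralnaive}) is to guarantee that working with the strongly connected components of a witnessing graph $G$ is legitimate; I would flag where this assumption is used, as it is the one structural addition over the naive version.
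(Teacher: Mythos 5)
Your proposal is correct and follows essentially the same route as the paper's proof: the same decomposition into strongly connected components whose sizes sum to at most $n$, the same embedding of the size sequence into $u_n$ via an increasing $f$, and the same induction on the number of components traversed, with the monotonicity of $f$ guaranteeing that the run only ever needs to advance through the copies. One minor inaccuracy worth noting: positional determinacy is in fact not used anywhere in the paper's argument (decomposing a \emph{graph} into strongly connected components requires no such hypothesis), so your guess about where that assumption enters is off, though this does not affect the validity of the proof.
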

\begin{proof}
We follow the same lines as for Lemma~\ref{lem:stronglyconnectedtogeneralnaive}, 
in particular the same argument implies that $\auto(u_n)$ satisfies $\Omega$.

Let $G$ be a graph satisfying $\Omega$, we show that $\IPath(G) \subseteq L(\auto(u_n))$.
We decompose $G$ into strongly connected components as before. 
Let us write $v = (|V(G_1)|,\dots,|V(G_p)|)$ the sequence of sizes of the components.
The sequence $v$ has size at most $n$, implying that $v$ embeds into $u_n$:
there exists an increasing function $f : [1,p] \to [1,|u_n|]$ such that
for all $i \in [1,p]$ we have $|V(G_i)| \le u_{f(i)}$.
It follows that for each $i \in [1,p]$, we have $\IPath(G_i) \subseteq L(\auto_{f(i)})$.

Let us consider a path $\pi$ in $G$, we show that $\pi$ is accepted by $\auto(u_n)$. 
The proof is by induction on the number of strongly connected components that $\pi$ traverses.

The base case is if $\pi$ traverses only one such component, say $G_i$. 
This implies that the run of $\pi$ on $\auto(u_n)$ either remains in the first $f(i) - 1$ copies, 
or reaches the $f(i)$ copy to read a suffix $\pi'$ of $\pi$. 
In the latter case, since $\IPath(G_i) \subseteq L(\auto_{f(i)})$ and $\pi'$ also remains in $G_i$, 
then $\auto_{f(i)}$ accepts $\pi'$. 
In both cases $\pi$ is accepted by $\auto(u_n)$.

Otherwise, let $G_i$ the first strongly connected component traversed by $\pi$.
The run of $\pi$ on $\auto(u_n)$ either remains in the first $f(i) - 1$ copies,
or reaches the $f(i)$ copy to read a suffix $\pi'$ of $\pi$.
Since $\IPath(G_i) \subseteq L(\auto_{f(i)})$, this run remains in the $f(i)$ copy as long as $\pi'$ remains in $G_i$.
This can either hold forever, in which case $\pi$ is accepted by $\auto(u_n)$,
or eventually the run jumps to the $f(i) + 1$ copy to read a suffix $\pi''$ of $\pi'$ (hence of $\pi$). 
In the latter case, since $\pi''$ traverses one less strongly connected component by induction hypothesis 
$\pi''$ is accepted using the copies from $f(i) + 1$ of $\auto(u_n)$. 
Thus $\pi$ is accepted by $\auto(u_n)$.
\end{proof}

To appreciate the improvement of Lemma~\ref{lem:stronglyconnectedtogeneral} over Lemma~\ref{lem:stronglyconnectedtogeneralnaive}, 
let us consider the case where the $(k,\Omega)$-separating automaton $\auto_k$ for strongly connected graphs has size $\alpha k$, where $\alpha$ does not depend on $k$ (see Theorem~\ref{thm:separatingautomatamp} for an example of such a case).
Then Lemma~\ref{lem:stronglyconnectedtogeneralnaive} yields an $(n,\Omega)$-separating automaton
of size $\alpha n^2$ while Lemma~\ref{lem:stronglyconnectedtogeneral} brings it down to $O(\alpha n \log(n))$.

\subsection{Separating automata for disjunctions of mean payoff objectives} 
\label{subsec:universalgraphsdisjunctionsmeanpayoff}
We state in the following theorem an upper bound on the construction of separating automata for disjunctions of mean payoff objectives.

\begin{theorem}
\label{thm:upperbounduniversalgraphdisjmeanpayoff}
Let $n,d,N \in \N$.
There exists an $(n,\bigvee_{i \in [1,d]} \MP{N}^i)$-separating automaton of size $O(n \log(n) \cdot d N)$.
\end{theorem}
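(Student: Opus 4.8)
The plan is to reduce the general case to strongly connected graphs and then invoke Lemma~\ref{lem:stronglyconnectedtogeneral}. Write $\Omega = \bigvee_{i \in [1,d]} \MP{N}^i$, which is prefix independent and positionally determined. I will construct, for each $k \in [1,n]$, a $(k,\Omega)$-separating automaton $\auto_k$ \emph{for strongly connected graphs} of size $O(dkN)$. Writing $u_n = (x_1,\dots,x_\ell)$, whose total size satisfies $\sum_j x_j = O(n\log n)$, Lemma~\ref{lem:stronglyconnectedtogeneral} then produces an $(n,\Omega)$-separating automaton $\auto(u_n) = \langle \auto_{x_1},\dots,\auto_{x_\ell}\rangle$ of size $\sum_j O(d x_j N) = O(dN)\sum_j x_j = O(n\log(n)\cdot dN)$, which is the claimed bound.

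The heart of the argument, and the step I expect to be the main obstacle, is the following structural fact: a strongly connected graph $G$ of size at most $k$ satisfies $\Omega$ if and only if there is a \emph{single} coordinate $i \in [1,d]$ such that $G$ satisfies $\MP{N}^i$. Only the forward direction is needed, and I would prove its contrapositive. Suppose that for every coordinate $i$ there is a cycle $C_i$ in $G$ whose total weight on coordinate $i$ is strictly negative. Using strong connectivity I would stitch these cycles into a single infinite path $\pi$ organized in phases cycling through the coordinates $1,2,\dots,d,1,2,\dots$: phase $m$ travels along a connecting path to $C_i$ (with $i = ((m-1)\bmod d)+1$) and then traverses $C_i$ exactly $r_m$ times before moving on. Choosing $r_m$ large enough relative to the length accumulated so far, the running average on coordinate $i$ right after phase $m$ is pushed below $\tfrac12$ of the (negative) mean of $C_i$; since connecting paths have bounded length and each coordinate is revisited infinitely often, this forces the $\liminf$ of the coordinate-$i$ averages of $\pi$ to be negative, \emph{for every} $i$. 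Hence no projection of $\pi$ lies in $\MP{N}$, so $\pi \notin \Omega$, contradicting $G \models \Omega$. The delicate points will be controlling the bounded-length connecting segments so they are dominated in the limit, and bookkeeping the super-fast growth of the $r_m$.

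Given this fact I would build $\auto_k$ as a sequential product. For each coordinate $i$ let $\auto^i$ be the $(k,\MP{N})$-separating automaton of Theorem~\ref{thm:separatingautomatamp} read on the $i$-th projection of the colours, so $L(\auto^i) \subseteq \MP{N}^i$, $\MP{N}^{\mid k} \subseteq L(\auto^i)$ in coordinate $i$, and $|\auto^i| = O(kN)$; set $\auto_k = \langle \auto^1,\dots,\auto^d\rangle$, of size $O(dkN)$. For $L(\auto_k) \subseteq \Omega$: any accepted infinite run eventually stays in one copy $\auto^i$, so some suffix lies in $L(\auto^i) \subseteq \MP{N}^i$, and prefix independence of $\MP{N}$ gives the whole word in $\MP{N}^i \subseteq \Omega$. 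For $\Omega^{\mid k}_{\text{sc}} \subseteq L(\auto_k)$: if $G$ is strongly connected of size at most $k$ and satisfies $\Omega$, the structural fact provides a coordinate $i$ with $G \models \MP{N}^i$, so $\auto^i$ started in its initial state accepts every infinite path of $G$ (each suffix of a path of $G$ is again an infinite path of $G$, and $\MP{N}^{\mid k} \subseteq L(\auto^i)$). The run of the sequential product moves forward through the copies at most $d-1$ times; hence it either settles in some copy $j < i$ (and $\pi$ is accepted) or reaches copy $i$, where it can never die — so $\auto_k$ accepts every infinite path of $G$. Combined with the size computation in the first paragraph, this establishes the theorem.
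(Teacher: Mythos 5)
Your proof is correct and follows essentially the same route as the paper: the decomposition lemma showing a strongly connected graph satisfying the disjunction must satisfy a single $\MP{N}^i$, the sequential product of $d$ coordinate-wise mean payoff separating automata to handle strongly connected graphs, and then Lemma~\ref{lem:stronglyconnectedtogeneral} with the universal sequence $u_n$ to pass to general graphs with only an $O(\log n)$ overhead. If anything, your write-up is more explicit than the paper's in two places it leaves terse (the proof of Corollary~\ref{cor:universalgraphstronglyconnectedgraphsdisjmeanpayoff}, and the quantitative control ensuring the $\liminf$ of the running averages is bounded away from $0$ in the cycle-stitching argument), but the underlying ideas are identical.
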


As suggested by the previous subsection, we will start by constructing $(n,\bigvee_{i \in [1,d]} \MP{N}^i)$-separating automata
for strongly connected components.
The following result shows a decomposition property.

\begin{lemma}
Let $G$ be a strongly connected graph and $N \in \N$.
If $G$ satisfies $\bigvee_{i \in [1,d]} \MP{N}^i$ then there exists $i \in [1,d]$ such that $G$ satisfies $\MP{N}^i$.
\end{lemma}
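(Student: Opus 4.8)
The plan is to prove the contrapositive: assuming that $G$ does not satisfy $\MP{N}^i$ for any $i \in [1,d]$, I would construct a single infinite path $\pi \in \IPath(G)$ whose $i$-th coordinate fails $\MP{N}$ for \emph{every} $i$ simultaneously, contradicting the hypothesis that $G$ satisfies $\bigvee_{i \in [1,d]} \MP{N}^i$. The first step is to record, for each $i$, a witness for the failure of $\MP{N}^i$. By the standard characterisation of one-dimensional mean payoff, a finite graph satisfies $\MP{N}$ if and only if it has no cycle of negative total weight; hence the failure of $\MP{N}^i$ yields a cycle $C_i$ in $G$ whose total weight on the $i$-th coordinate is some integer $w_i \le -1$, of length $\ell_i = \len(C_i)$.

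The naive idea --- traversing $C_1, \dots, C_d$ with fixed relative frequencies so that every coordinate average becomes negative --- does not work in general: there are matrices of cycle weights with negative diagonal admitting no nonnegative combination that is negative in every coordinate, so no single uniform mixing rate of the $C_i$'s can make all coordinate averages converge to negative values. The key observation is that $\MP{N}$ is defined with a $\liminf$, so it suffices to drive each coordinate average below $0$ \emph{infinitely often}, at possibly different times. I would therefore interleave the cycles in phases of rapidly growing length: fix a vertex $u_i$ on each $C_i$ and, using strong connectivity, fix a connecting path between consecutive base vertices; then let $\pi$ loop $C_1$ many times, move to $C_2$ and loop it even more times, and so on, cyclically through $[1,d]$ forever, where in the $k$-th phase the active cycle $C_i$ is iterated $m_k$ times with $m_k$ chosen so large that it dominates everything accumulated so far.

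Concretely, at the end of a phase iterating $C_i$, the cumulative $i$-th coordinate sum equals $m_k w_i + R$, where $R$ gathers the contributions of all earlier phases and connecting paths and satisfies $|R| \le N P_k$ with $P_k$ the total preceding length. Choosing $m_k \ge 2 N P_k$ forces this sum below $-m_k/2$, while the total length stays below $m_k(1+\ell_i)$, so the $i$-th coordinate average at that moment is at most $-\frac{1}{2(1+\ell_i)} \le -\frac{1}{2(1+\max_i \ell_i)}$, a fixed negative constant. Since each $i$ is revisited in infinitely many phases, the $\liminf$ of the $i$-th coordinate averages of $\pi$ is negative for every $i$, so $\pi$ violates every $\MP{N}^i$ and hence the disjunction. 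The main obstacle is precisely the design of the growth schedule $(m_k)$ together with this domination estimate: one must check that the bounded connecting paths are negligible against $m_k$, and that evaluating the average exactly at the end of each cycle block makes the current negative phase dominate both numerator and denominator uniformly in $k$.
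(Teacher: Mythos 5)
Your proof is correct and takes essentially the same route as the paper's: argue the contrapositive, extract a negative cycle $C_i$ for each coordinate, and use strong connectivity to interleave these cycles cyclically with rapidly growing repetition counts so that every coordinate's running average drops below a fixed negative constant infinitely often, violating each $\MP{N}^i$. You simply make explicit the domination estimate on the schedule $(m_k)$ that the paper's sketch leaves implicit.
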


\begin{proof}
We prove the contrapositive property: assume that for all $i \in [1,d]$ the graph does not satisfy $\MP{N}^i$,
implying that for each $i \in [1,d]$ there exists a negative cycle $C_i$ around some vertex $v_i$.
By iterating each cycle an increasing number of times, we construct a path in $G$ which does not satisfy $\bigvee_i \MP{N}^i$.

To make this statement formal, we use the following property:
for any $i \in [1,d]$, any finite path $\pi$ can be extended to a finite path $\pi \pi'$ with weight less than $-1$ on the $i$\textsuperscript{th} component.
This is achieved simply by first going to $v_i$ using strong connectedness, then iterating through cycle $C_i$ a sufficient number of times.

We then apply this process repeatedly and in a cyclic way over $i \in [1,d]$ to construct an infinite path
such that for each $i \in [1,d]$, infinitely many times the $i$-th component is less than $-1$.
This produces a path which does not satisfy $\bigvee_{i \in [1,d]} \MP{N}^i$, a contradiction.
\end{proof}

\begin{corollary}
\label{cor:universalgraphstronglyconnectedgraphsdisjmeanpayoff}
Let $n,N \in \N$ and $\auto$ be an $(n,\MP{N})$-separating automaton for strongly connected graphs.
We construct $d \cdot \auto$ the disjoint union of $d$ copies of $\auto$, where the $i$-th copy reads the $i$-th component.
Then $d \cdot \auto$ is an $(n,\bigvee_{i \in [1,d]} \MP{N}^i)$-separating automaton for strongly connected graphs. 
\end{corollary}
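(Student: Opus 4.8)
The plan is to unfold the definition of the language of $d \cdot \auto$ and then dispatch the two inclusions required of a separating automaton. Since $d \cdot \auto$ is a disjoint union in which the $i$-th copy runs $\auto$ on the $i$-th component, its language is exactly
\[
L(d \cdot \auto) = \set{w \in ([-N,N]^d)^\omega : \exists i \in [1,d],\ w_i \in L(\auto)},
\]
where $w_i$ denotes the projection on the $i$-th coordinate. With this description in hand both inclusions become short.

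For the soundness inclusion $L(d \cdot \auto) \subseteq \bigvee_{i \in [1,d]} \MP{N}^i$, I would take $w \in L(d \cdot \auto)$ and fix $i$ with $w_i \in L(\auto)$. Because $\auto$ is $(n,\MP{N})$-separating for strongly connected graphs we have $L(\auto) \subseteq \MP{N}$, hence $w_i \in \MP{N}$, which is precisely the statement that $w \in \MP{N}^i$, so a fortiori $w \in \bigvee_{i \in [1,d]} \MP{N}^i$. Note this direction does not use strong connectedness of any graph; it relies only on the upper-bound half of the separation property of $\auto$.

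For the completeness inclusion $(\bigvee_{i \in [1,d]} \MP{N}^i)^{\mid n}_{\text{sc}} \subseteq L(d \cdot \auto)$, let $G$ be a strongly connected graph of size at most $n$ satisfying $\bigvee_{i \in [1,d]} \MP{N}^i$; I must show every infinite path of $G$ is accepted. Here I would invoke the preceding decomposition lemma: since $G$ is strongly connected and satisfies the disjunction, there exists a single index $i \in [1,d]$ such that $G$ satisfies $\MP{N}^i$. Projecting $G$ onto its $i$-th component yields a strongly connected graph over the colours $[-N,N]$, of size at most $n$, satisfying $\MP{N}$. By the lower-bound half of the separation property of $\auto$ for strongly connected graphs, namely $\MP{N}^{\mid n}_{\text{sc}} \subseteq L(\auto)$, every infinite path of this projected graph lies in $L(\auto)$. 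Consequently, for every $\pi \in \IPath(G)$ the $i$-th projection $\col(\pi)_i$ belongs to $L(\auto)$, so $\pi$ is accepted by the $i$-th copy of $\auto$ and therefore by $d \cdot \auto$.

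The crux of the argument, and the reason the statement is restricted to strongly connected graphs, is the appeal to the decomposition lemma: it guarantees one uniform winning coordinate $i$ valid for the whole graph at once. Without strong connectedness different infinite paths could require different coordinates, and no single copy of $\auto$ would accept them all; the disjoint-union construction accepts a path only through one fixed copy at a time, so the uniformity supplied by the lemma is exactly what makes completeness go through. The remaining verifications — that the projected graph is strongly connected, has size at most $n$, and that a path of $G$ projects to a path of the component graph — are routine and follow directly from the definitions.
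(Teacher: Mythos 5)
Your proof is correct and follows exactly the route the paper intends: the corollary is stated without an explicit proof precisely because it is meant to be the immediate consequence of the preceding decomposition lemma, which supplies the single uniform coordinate $i$ for the whole strongly connected graph, combined with $L(\auto) \subseteq \MP{N}$ for soundness. Your remark on why strong connectedness is indispensable (a fixed copy of $\auto$ must accept every path of $G$) is exactly the right observation.
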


We can now prove Theorem~\ref{thm:upperbounduniversalgraphdisjmeanpayoff}.
Thanks to Theorem~\ref{thm:separatingautomatamp}, there exists an $(n,\MP{N})$-separating automaton of size $nN$.
Thanks to Corollary~\ref{cor:universalgraphstronglyconnectedgraphsdisjmeanpayoff}, this implies
an $(n,\bigvee_{i \in [1,d]} \MP{N}^i)$-separating automaton for strongly connected graphs of size $ndN$.
Now Lemma~\ref{lem:stronglyconnectedtogeneral} yields an $(n,\bigvee_i \MP{N}^i)$-separating automaton 
of size $O(n \log(n) \cdot d N)$.

\subsection{The complexity of solving disjunctions of mean payoff games using separating automata}
\label{subsec:complexitydisjunctionsmeanpayoff}

As in the previous case disjunction of mean payoff games were studied focussing on the objective of the opponent,
who has a conjunction of mean payoff objectives to satisfy~\cite{VelnerC0HRR15}.
Let us note here that there are actually two variants of the mean payoff objective: using infimum limit or supremum limit.
In many cases the two variants are equivalent: 
this is the case when considering mean payoff games or disjunctions of parity and mean payoff games.
However, this is not true anymore for disjunctions of mean payoff objectives, as explained in~\cite{VelnerC0HRR15}.
Our constructions and results apply using the infimum limit, and do not extend to the supremum limit.

The main result related to disjunction of mean payoff games in~\cite{VelnerC0HRR15} (Theorem~6) is that the problem is in $\NP \cap \coNP$ and can be solved in time $O(m \cdot n^2 \cdot d \cdot N)$.
(Note that since we consider the dual objectives, the infimum limit becomes a supremum limit in~\cite{VelnerC0HRR15}.)

Combining Theorem~\ref{thm:upperbounduniversalgraphdisjmeanpayoff}, Theorem~\ref{thm:separatingautomatamp},
and Theorem~\ref{thm:safetygames} yields the following result.

\begin{corollary}
Let $n,m,d,N \in \N$.
There exists an algorithm in the RAM model with word size $w = \log(n) + \log(N) + \log(d)$ for solving disjunctions of mean payoff games with weights in $[-N,N]$ 
with time complexity $O(m \cdot n \log(n) \cdot d \cdot N)$ and space complexity $O(n)$.
\end{corollary}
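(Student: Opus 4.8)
The plan is to assemble the three ingredients already at our disposal: the separating automaton of Theorem~\ref{thm:upperbounduniversalgraphdisjmeanpayoff}, the reduction to safety games of Theorem~\ref{thm:separatingautomata}, and the safety-game solver of Theorem~\ref{thm:safetygames}. First I would instantiate Theorem~\ref{thm:upperbounduniversalgraphdisjmeanpayoff} to obtain an $(n,\bigvee_{i \in [1,d]} \MP{N}^i)$-separating automaton $\auto$ whose size is $s = O(n \log(n) \cdot d N)$. Since disjunctions of mean payoff objectives are prefix independent and positionally determined, the hypotheses of Theorem~\ref{thm:separatingautomata} are met, so for the given game $\game$ of size $n$ and initial vertex $v_0$, Eve wins from $v_0$ in $\game$ if and only if she wins from $(v_0,q_0)$ in the safety game $\game \chain \auto$. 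It therefore suffices to solve this safety game.

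Next I would apply Theorem~\ref{thm:safetygames} to $\game \chain \auto$. By construction the product has $n \cdot s + 1$ vertices, and its edges are obtained by replicating each edge of $\game$ once per state of $\auto$, so it has $O(m \cdot s)$ edges. The running time of Theorem~\ref{thm:safetygames} is linear in the number of edges, which gives $O(m \cdot s) = O(m \cdot n \log(n) \cdot d N)$, matching the claimed bound. One should also verify that the chosen word size $w = \log(n) + \log(N) + \log(d)$ is large enough: a single cell must hold a state of $\auto$ (an index into a set of size $s$, hence $O(1)$ cells), a weight in $[-N,N]$, and a coordinate index in $[1,d]$, and all the arithmetic performed while following a transition of $\auto$ (comparing and accumulating weights across the $d$ components) costs $O(1)$ per edge.

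The step requiring the most care, and the main obstacle, is the space bound: a naive black-box application of Theorem~\ref{thm:safetygames} to $\game \chain \auto$ yields space $O(n \cdot s)$ rather than the claimed $O(n)$. To recover $O(n)$ I would exploit the structure of $\auto$: the separating automata produced in Theorem~\ref{thm:separatingautomatamp}, and preserved by the disjoint-union and sequential-product constructions of Corollary~\ref{cor:universalgraphstronglyconnectedgraphsdisjmeanpayoff} and Lemma~\ref{lem:stronglyconnectedtogeneral}, come from universal graphs and hence carry a natural order on their states under which the winning region of $\game \chain \auto$ is monotone in the automaton coordinate. Consequently it suffices to store, for each of the $n$ vertices of $\game$, the single threshold state at which $(v,q)$ switches from winning to losing; this is exactly the progress-measure (value-iteration) view, and it uses only $O(n)$ cells. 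Establishing this monotonicity for the combined automaton is the delicate point, and once it is in hand the overall space stays $O(n)$, completing the argument.
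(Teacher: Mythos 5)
Your proposal takes essentially the same route as the paper, whose proof is precisely the one-line combination of Theorem~\ref{thm:upperbounduniversalgraphdisjmeanpayoff}, the reduction of Theorem~\ref{thm:separatingautomata}, and the safety-game solver of Theorem~\ref{thm:safetygames}, with the same $O(m \cdot n \log(n) \cdot d \cdot N)$ accounting and the same word-size check. Your treatment of the $O(n)$ space bound is in fact more explicit than the paper's: the paper silently relies on the value-iteration view you describe (one ordered automaton state stored per vertex of $\game$), hinting at it only through the remark that comparing states $q \le q'$ is a unit operation, so flagging the monotonicity of the winning region in the automaton coordinate as the delicate point is a fair and accurate observation rather than a deviation.
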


Note that for the choice of word size, the two tasks for manipulating separating automata, 
namely computing $\delta(q,w)$ and checking whether $q \le q'$ 
are indeed unitary operations as they manipulate numbers of order $nN$.

\section*{Conclusions}
The conceptual contribution of this paper is to show that the notion of separating automata is a powerful tool for constructing algorithms. We illustrated this point with two applications, constructing algorithms for two classes of objectives.
A key appeal of our approach is that we assumed the existence of separating automata both for parity and for mean payoff objectives and used them as black boxes to construct separating automata for objectives combining them.

It is tempting to use our approach for constructing separating automata for disjunctions of parity objectives.
However, solving games with disjunctions of two parity objectives is NP-complete~\cite{CHP07}, hence we cannot hope for subexponential separating automata.

In this paper we offered upper bounds on the sizes of separating automata for two classes of objectives.
We leave open the questions of proving (matching) lower bounds, which would indicate that this approach cannot
yield better algorithms in these cases.

\bibliographystyle{eptcs}
\bibliography{bib}

\end{document}